\documentclass[journal, a4paper, twoside]{IEEEtran}
\IEEEoverridecommandlockouts

\usepackage[utf8]{inputenc} 
\usepackage[T1]{fontenc}
\usepackage{url}
\usepackage{ifthen}
\usepackage{cite}
\usepackage[cmex10]{amsmath} 
\usepackage{amssymb}
\usepackage{graphicx}
\usepackage{epstopdf}
\usepackage{epsfig}
\usepackage{multirow}
\usepackage{amsthm}
\usepackage{comment}
\usepackage{caption}
\usepackage{enumitem}
\usepackage{amsmath}
\usepackage{color}
\usepackage{bbm}
\usepackage{algorithm}
\usepackage{algpseudocode}
\usepackage{csquotes}
\usepackage{subfiles}
\usepackage{caption, multirow, makecell}
\usepackage{array}
\usepackage{caption}
\usepackage{nicematrix}
\usepackage{cuted}
\usepackage{subcaption}
\usepackage{mathtools}

\DeclarePairedDelimiter\floor{\lfloor}{\rfloor}

\newtheorem{thm}{Theorem}
\newtheorem{lem}{Lemma}
\newtheorem{prop}{Proposition}

\newtheorem{defn}{Definition}

\newtheorem{corollary}{Corollary}

\def\BibTeX{{\rm B\kern-.05em{\sc i\kern-.025em b}\kern-.08em
    T\kern-.1667em\lower.7ex\hbox{E}\kern-.125emX}}

\begin{document}

\title{A Secretive Coded Caching \\for Shared Cache Systems using PDAs

}

\author{\IEEEauthorblockN{ Elizabath Peter, K. K. Krishnan Namboodiri and B. Sundar Rajan\\}
	\IEEEauthorblockA{Department of Electrical Communication Engineering, IISc
		Bangalore, India \\
		E-mail: \{elizabathp,krishnank,bsrajan\}@iisc.ac.in}
}

\maketitle

\begin{abstract}
	This paper considers the secretive coded caching problem with shared caches in which no user must have access to the files that it did not demand. In a shared cache network, the users are served by a smaller number of helper caches and each user is connected to exactly one helper cache. To ensure the secrecy constraint in shared cache networks, each user is required to have an individual cache of at least unit file size. For this setting, a secretive coded caching scheme was proposed recently in the literature (\enquote{Secretive Coded Caching with Shared Caches}, in \textit{IEEE Communications Letters}, 2021), and it requires a subpacketization level which is in the exponential order of the number of helper caches. By utilizing the PDA constructions, we propose a procedure to obtain new secretive coded caching schemes for shared caches with reduced subpacketization levels. We also show that the existing secretive coded caching scheme for shared caches can be recovered using our procedure. Furthermore, we derive a lower bound on the secretive transmission rate using cut-set arguments and demonstrate the order-optimality of the proposed secretive coded caching scheme. 
\end{abstract}

\begin{IEEEkeywords}
	Coded caching, shared cache, secret sharing, Placement Delivery Array, subpacketization level.
\end{IEEEkeywords}

\section{Introduction}

The advent of smart devices accompanied by the rise of on-demand streaming services and content-based applications has led to a dramatic increase in wireless data traffic over the last two decades. Coded caching has been proposed as a promising technique to reduce the traffic congestion experienced during peak hours by exploiting the memory units distributed across the network. The idea of coded caching was first introduced in the work of Maddah-Ali and Niesen \cite{MaN}, which emphasized the benefits and need of the joint design of storage and delivery policies in content delivery networks. In \cite{MaN}, the setting considered is that of a single server having access to a library of $N$ equal length files and connected to $K$ users through an error-free shared link. Each user is equipped with a dedicated cache of size $M \leq N$ files. The caches are populated with portions of file contents during off-peak times without knowing the future demands of the users, and this is called the \textit{placement phase}. The \textit{delivery phase} happens at peak times, during which the users inform their demands to the server, and the server aims to satisfy the users' demands with minimum transmission load over the shared link. Each user is able to recover its demanded files using the received messages and the cache contents. The sum of each transmitted message's length normalized with respect to the file length is defined as the \textit{rate} of the coded caching scheme. The objective of any coded caching problem is to jointly design the placement and delivery phases such that the rate required to satisfy the users' demands is minimum.

The coded caching scheme in \cite{MaN}, which is referred to as the MN scheme henceforth, was shown to be optimal under the constraint of uncoded placement when $N \geq K$ in \cite{YMA}, \cite{WTP}.  In \cite{YMA}, the MN scheme was modified to obtain another scheme that was optimal for the $N < K$ case as well. The coded caching approach has been extended to a variety of settings that include decentralized caching \cite{MaN2}, multi-access network \cite{HKD}, shared cache network \cite{PUE} and many more. In \cite{RPKP}, for the same setting considered in \cite{MaN}, an additional constraint was incorporated, which ensures that no user can obtain any information about the database files other than its demanded file either from the cache contents or the server transmissions. This setup was referred to as private or secretive coded caching in the literature \cite{RPKP}, \cite{RPKP1}. We resort to the latter terminology in this work. In \cite{RPKP}, an achievable secretive coded caching scheme was proposed for both centralized and decentralized settings. The scheme in \cite{RPKP} also guarantees secure delivery against external eavesdroppers. The secure delivery condition was addressed separately in \cite{STC}. The secretive coded caching problem was then extended to other settings that include shared cache networks \cite{MeR} device-to-device networks \cite{ZeY}, combination networks \cite{ZeY1} and considering collusion among users in dedicated cache network \cite{MaS}. We consider the problem of secretive coded caching with shared caches introduced in \cite{MeR}. The shared cache network introduced in \cite{PUE} consists of a server with $N$ equal length files and is connected to $K$ users with the assistance of $\Lambda \leq K$ helper caches as shown in Fig.~\ref{fig:setting}. Each user has access to only one cache, and each cache can serve an arbitrary number of users. But to ensure the secrecy condition in the shared cache network, each user is required to have a dedicated cache of size at least one unit of file, and this is the setting considered in \cite{MeR}.

The centralized secretive coded caching scheme presented in \cite{RPKP} uses the idea of secret sharing \cite{CDN},\cite{Shamir} and is derived from the MN scheme. Hence, the scheme requires a subpacketization level-the number of smaller parts to which a file is split into- which is in the exponential order of $K-1$. Therefore, the scheme requires splitting of finite length files into an exponential number of packets resulting in a scenario where the overhead bits involved in each transmission outnumber the data bits present in it, and this limits the practical applicability of the scheme. In \cite{YCT}, Yan \textit{et al.} showed that combinatorial structures called Placement Delivery Arrays (PDAs) can be used to design coded caching schemes for dedicated cache networks having low subpacketization levels. For dedicated cache setup, secretive coded caching schemes with reduced subpacketization levels were obtained from PDAs in \cite{MeR1}. The exponentially growing subpacketization level with respect to the number of caches is pervasive in shared cache systems as well and is evident from the secretive coded caching scheme proposed for shared cache networks in \cite{MeR}. In \cite{MeR}, the subpacketization level required is in the exponential order of $\Lambda-1$. Even for moderate network sizes, the subpacketization level required by the scheme in \cite{MeR} turns out to be high. Since shared cache networks better capture more realistic settings and maintaining the confidentiality of the data is essential in several applications such as paid subscription services, it is necessary to look for practically realizable secretive coded caching schemes for shared caches. In \cite{PeR}, PDAs were used to derive non-secretive coded caching schemes for shared cache networks having low subpacketization levels than the optimal scheme in \cite{PUE}. In this work, we identify new secretive coded caching schemes for shared caches using PDAs having lower subpacketization requirements than the scheme in \cite{MeR}. Further, we characterize the performance of our scheme by deriving a cut-set based lower bound for the shared cache setting after incorporating the secrecy condition.

\subsection{Contributions}
In this work, we study the secretive coded caching with shared cache networks. Our contributions are summarized below:
\begin{itemize}
	\item A procedure is proposed to obtain new secretive coded caching schemes for shared caches using PDAs. The advantage of our procedure is that it results in schemes with lesser subpacketization level than the corresponding scheme in \cite{MeR} (Section~\ref{sec:secretive}).
	\item A lower bound on the optimal rate-memory trade-off of secretive coded caching with shared caches is derived using cut-set based arguments and characterizes the multiplicative gap between the achievable rate and the lower bound (Section~\ref{privateappendix}).
	\item We also show that the secretive coded caching scheme in \cite{MeR} can be recovered using a PDA. Hence, our procedure subsumes the scheme in \cite{MeR} as a special case (Section~\ref{sec:secretive}).
\end{itemize}

The rest of the paper is organized as follows. In Section~\ref{sec:prelim}, we briefly discuss some of the topics that are relevant for our scheme description. We describe the problem setup and present the main results in Sections~\ref{sec:prob_setup} and~\ref{sec:mainres}, respectively. In Section~\ref{sec:secretive}, we describe the proposed scheme and in Section~\ref{privateappendix}, the lower bound and the order optimality of the scheme are presented. Section \ref{sec:con} summarizes our results.

\textit{Notations}: For a positive integer $n$, $[n]$ denotes the set $\{1,2,\ldots,n\}$. For any set $\mathcal{S}$, $|\mathcal{S}|$ denotes the cardinality of $\mathcal{S}$. Binomial coefficients are denoted by $\binom{n}{k}$, where $\binom{n}{k} \triangleq \frac{n!}{k!(n-k)!}$ and $\binom{n}{k}$ is zero for $n < k$. Bold uppercase and lowercase letters denote matrices and vectors, respectively. For a vector $\mathbf{a}=(a_1,a_2,\ldots,a_n)$, $\mathbf{a}_{\mathcal{S}}$ denotes the vector consisting the elements in $\mathbf{a}$ at positions specified by the elements in the set $\mathcal{S} \subset [n]$. The columns of an $m \times n$ matrix $\mathbf{A}$ is denoted by $\mathbf{a}_1$, $\mathbf{a}_2,\ldots,\mathbf{a}_n$. An identity matrix of size $n$ is denoted as $\mathbf{I}_n$. The finite field with $q$ elements is denoted by $\mathbb{F}_q$.

 \begin{figure}[t!]
  	\begin{center}
  		\captionsetup{justification=centering}
  		\includegraphics[width=\columnwidth]{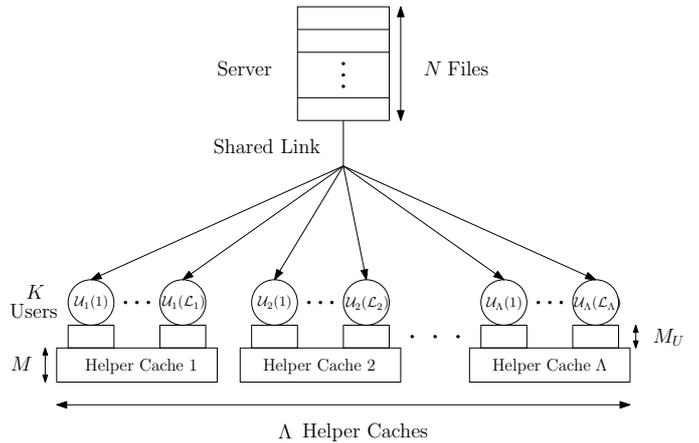}
  		\caption{Problem setting for a shared cache network under secrecy constraint.}
  		\label{fig:setting}
  	\end{center}
  \end{figure}

\section{Preliminaries}
\label{sec:prelim}
In this section, we briefly review PDAs and secret sharing schemes which are required for describing our scheme.

\subsection{Placement Delivery Array (PDA)}
\begin{defn}
	\label{def:pda}
	(\cite{YCT}) For positive integers $K, F, Z$ and $S$, an $F \times K$ array $\mathbf{P}=(p_{j,k})$, $j \in [F]$ and $k \in [K]$, composed of a specific symbol $\star$ and $S$ positive integers $1,2,\ldots, S$, is called a $(K,F,Z,S)$ placement delivery array (PDA) if it satisfies the following three conditions: \\
	\textit{C1}. The symbol $\star$ appears $Z$ times in each column.\\
	\textit{C2}. Each integer occurs at least once in the array.\\
	\textit{C3}. For any two distinct entries $p_{j_1,k_1}$ and $p_{j_2,k_2}$, $p_{j_1,k_1}=p_{j_2,k_2}=s$ is an integer only if
	\begin{enumerate}[label=(\alph*)]
		
		\item $j_1 \neq j_2$, $k_1 \neq k_2$, i.e., they lie in distinct rows and distinct columns, and
		\item $p_{j_1,k_2}=p_{j_2,k_1}=\star$, i.e., the corresponding $2\times2$ sub-array formed by rows $j_1, j_2$ and columns $k_1,k_2$ must be of the following form:\\
		\begin{center}
			
			$\begin{pmatrix}
			s & \star\\
			\star & s
			\end{pmatrix}$
			\hspace{0.3cm}or\hspace{0.3cm}
			$\begin{pmatrix}
			\star & s \\
			s & \star
			\end{pmatrix}$ 
			
		\end{center}

	\end{enumerate}
\end{defn}

Every $(K,F,Z,S)$ PDA corresponds to a coded caching scheme for dedicated cache network with parameters $K,M$ and $N$ as in Lemma~\ref{lemm:1}.

\begin{lem}
	(\cite{YCT}) For a given $(K, F, Z, S)$ PDA $\mathbf{P}=(p_{j,k})_{F \times K}$, a $(K,M,N)$ coded caching scheme can be obtained with subpacketization level $F$ and $\frac{M}{N}=\frac{Z}{F}$ using Algorithm 1. For any distinct demand vector $\mathbf{d}$, the demands of all the users are met with a rate, $R=\frac{S}{F}$.
	\label{lemm:1}
\end{lem}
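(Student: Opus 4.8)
The plan is to reconstruct the Yan–Cheng–Tang correspondence explicitly and verify that the three PDA axioms \textit{C1}–\textit{C3} are exactly what is needed for the induced placement-and-delivery scheme to work. First I would describe the placement: split each file $W_n$ into $F$ equal subfiles $W_n = (W_{n,j} : j \in [F])$, and let user $k$ cache the subfiles $W_{n,j}$ for every $n \in [N]$ and every row index $j$ such that $p_{j,k} = \star$. By condition \textit{C1} the symbol $\star$ occurs $Z$ times in column $k$, so user $k$ stores $Z$ out of $F$ subfiles of each of the $N$ files, giving a normalized cache size $M/N = Z/F$ as claimed. This is uncoded placement and is demand-independent, as required in the placement phase.

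Next I would specify the delivery phase. Fix a distinct demand vector $\mathbf{d} = (d_1,\dots,d_K)$. For each integer $s \in [S]$ appearing in $\mathbf{P}$, the server transmits the signed (or $\mathbb{F}_q$-) sum $\bigoplus_{(j,k)\,:\,p_{j,k}=s} W_{d_k,\,j}$. Condition \textit{C2} guarantees every $s$ actually appears, so there are exactly $S$ transmissions, each of size one subfile, i.e. $1/F$ of a file; hence the rate is $R = S/F$. The remaining work is the decoding argument: I must show user $k$ can recover every missing subfile $W_{d_k,j}$, i.e. every $j$ with $p_{j,k} = s$ an integer. In the transmission indexed by that $s$, user $k$ needs to cancel all the other terms $W_{d_{k'},j'}$ with $p_{j',k'} = s$ and $(j',k') \neq (j,k)$. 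By \textit{C3}(a) such a term has $k' \neq k$ and $j' \neq j$, and by \textit{C3}(b) we have $p_{j',k} = \star$, so user $k$ has cached $W_{n,j'}$ for all $n$, in particular $W_{d_{k'},j'}$. Thus every interfering term is locally available, user $k$ subtracts them off, and is left with $W_{d_k,j}$. Doing this for each integer entry in column $k$ recovers all subfiles user $k$ lacks, so the demand is met.

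The step I expect to be the only real subtlety is the decoding argument, and within it the use of \textit{C3}(b): one has to be careful that a single transmission for symbol $s$ may contain several terms $W_{d_{k'},j'}$ (one for each cell $(j',k')$ with that value), and that \emph{each} of them individually satisfies $p_{j',k} = \star$ so that user $k$ can cancel \emph{all} of them simultaneously. The fact that an integer $s$ appears at most once per row and per column (forced by \textit{C3}(a)) also matters here, because it ensures the cell $(j,k)$ is the unique cell in row $j$ with value $s$ from user $k$'s perspective, so there is exactly one "wanted" term and the rest are pure interference. Everything else — counting the cache size from \textit{C1}, counting the transmissions from \textit{C2}, and reading off $M/N = Z/F$ and $R = S/F$ — is bookkeeping. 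Finally I would note this is precisely the scheme implemented by Algorithm~1, completing the proof of Lemma~\ref{lemm:1}.
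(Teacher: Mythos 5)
Your proposal is correct and follows exactly the argument the paper sketches informally after Lemma~\ref{lemm:1} (the lemma itself is cited from \cite{YCT}, so the paper only outlines it): condition \textit{C1} gives the cache ratio $Z/F$, \textit{C2} gives exactly $S$ transmissions of size $1/F$ each, and \textit{C3} yields decodability via interference cancellation. Your fleshed-out decoding step — using \textit{C3}(a) for row/column distinctness and \textit{C3}(b) to show every interfering subfile is cached — is the standard and intended justification.
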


\begin{algorithm}
	\renewcommand{\thealgorithm}{1}
	\caption{Coded caching scheme based on PDA \cite{YCT}}
	\begin{algorithmic}[1]
		\Procedure{Placement}{$\mathbf{P},W_{[1:N]}$}       
		\State Split each file $W_n$, $n \in [N]$ into $F$ packets: $W_n =\{W_{n,j}: j \in [F]\}$
		\For{\texttt{$k \in [K]$}}
		\State  $\mathcal{Z}_k$ $\leftarrow$ $\{W_{n,j}, \forall n \in [N]$: $p_{j,k}=\star, j \in [F]\}$
		\EndFor
		\EndProcedure
		
		\Procedure{Delivery}{$\mathbf{P},W_{[1:N]},\mathbf{d}$} 
		\For{\texttt{$s \in [S]$}}
		\State Server sends $\underset{\substack{p_{j,k}=s \\j\in [F],  \textrm{\hspace{0.05cm}} k\in[K]}}{\bigoplus}W_{d_k,j}$
		\EndFor    
		\EndProcedure
	\end{algorithmic}
\end{algorithm}

In a $(K,F,Z,S)$ PDA $\mathbf{P}$, the rows represent packets and the columns represent users. For any $k \in [K]$ if $p_{j,k}=\star$, then it implies that the user $k$ has access to the $j^{th}$ packet of all the files. The contents placed in the $k^{th}$ user's cache is denoted by $\mathcal{Z}_k$ in Algorithm 1. If $p_{j,k}=s$ is an integer, then it means that the user $k$ does not have access to the $j^{th}$ packet of any of the files. Condition $C1$ guarantees that all users have access to some $Z$ packets of all the files. According to the delivery procedure in Algorithm $1$, the server sends a linear combination of the requested packets indicated by the integer $s$ in the PDA. Therefore, condition $C2$ implies that the number of messages transmitted by the server is exactly $S$, and the rate achieved is $\frac{S}{F}$. Condition $C3$ ensures the decodability.

\subsection{Secret Sharing Schemes}
 The secretive coded caching schemes proposed so far in the literature rely on non-perfect secret sharing schemes. We also utilize the same in our scheme. The primary idea behind non-perfect sharing scheme is to encode the secret in such a way that accessing a subset of shares does not reveal any information about the secret and only accessing all the shares enable to recover the secret completely. The formal definition of the non-perfect secret sharing scheme is given below. 
 
 \begin{defn}
 	\label{def:secretshare}
  (\cite{CDN}) For a secret $W$ with size $B$ bits and $m < n$, an $(m,n)$ non-perfect secret sharing scheme generates $n$ equal-sized shares $S_1, S_2, \ldots, S_n$ such that accessing any $m$ shares does not reveal any information about the secret $W$ and $W$ can be completely reconstructed from all the $n$ shares. i.e,
   \begin{subequations}
   \begin{align}
     & I(W;\mathcal{S})  =0 \textrm{\hspace{0.25cm}} \forall \mathcal{S} \subseteq \{S_1,S_2,\ldots,S_n\} \textrm{\hspace{0.15cm}} s.t \textrm{\hspace{0.15cm}} |\mathcal{S}| \leq m, \label{share1}\\
     & H(W|S_1,S_2,\ldots,S_n)  = 0. \label{share2}
     \end{align}
   \end{subequations}
 \end{defn}
 
In the non-perfect secret sharing scheme, the size of each share should be at least $\frac{B}{n-m}$ bits \cite{RPKP}. For large enough $B$, there exists non-perfect secret sharing schemes with size of each share being equal to $\frac{B}{n-m}$ bits.

\section{Problem Setup}
\label{sec:prob_setup}
We consider a shared cache network as illustrated in Fig~\ref{fig:setting}. There is a central server with a library of $N$ independent files $W_{[1:N]}=\{W_1,W_2,\ldots,W_N\}$, each of size $B$ bits and is uniformly distributed over $[2^{B}]$. The server is connected to $K$ users through an error-free broadcast link, and there are $\Lambda $ helper caches, each of size equal to $M$ files. Each user gets connected to one helper cache and there is no limit on the number of users served by each helper cache. Further, each user has a dedicated cache of size $M_U$ files. The network operates in four phases as in \cite{MeR}:

\begin{enumerate}[label=(\alph*)]
	\item \textit{Helper Cache Placement phase}: Let the contents stored in the $\lambda^{th}$ helper cache be denoted as ${Z}_{\lambda}$, where $\lambda \in [\Lambda]$. The server fills each of the helper caches with functions of the library files $W_{[1:N]}$ and some randomness $V$ of appropriate size, such that 
	\begin{equation}
	I(W_{[1:N]};Z_{\lambda}) = 0 , \textrm{\hspace{0.05cm}} \forall \lambda \in [\Lambda].
	\label{plsec}
	\end{equation}
	
	Equation \eqref{plsec} implies that  no user is able to retrieve any information regarding any of the files from the cache contents that it gets access to. The placement is carried out without knowing the future demands of the users and their association to the caches and also, it satisfies the memory constraint at each helper cache. Let ${Z}_{[1:\Lambda]} = (Z_1,Z_2,\ldots,Z_{\Lambda})$ denote the contents stored in all the $\Lambda$ helper caches.

	\item \textit{User-to-cache association phase}:
	In this phase, each user gets connected to one of the helper caches and the set of users assigned to cache $\lambda \in [\Lambda]$ is denoted as $\mathcal{U}_{\lambda}$. The overall user-to-cache association is represented as $\mathcal{U}=\{\mathcal{U}_1,\mathcal{U}_2,\ldots,\mathcal{U}_{\Lambda}\}$. All these disjoint sets together form a partition of the set of $K$ users and this association of users to helper caches is independent of the cached contents and the subsequent demands. 
	For any user-to-cache association $\mathcal{U}$, the association profile $\mathcal{L}$ describes the number of users accessing each cache. Therefore, $\mathcal{L}=\{\mathcal{L}_1,\mathcal{L}_2,\ldots,\mathcal{L}_{\Lambda}\}$ where $\mathcal{L}_i =|\mathcal{U}_i|$ and  $\sum_{i=1}^{\Lambda}\mathcal{L}_{i}=K$.  Without loss of generality, assume that $\mathcal{L}_i \geq \mathcal{L}_j$ $\forall i\leq j$ and each $\mathcal{U}_i$ to be an ordered set. Each user in $\mathcal{U}_i$ is indexed as $\mathcal{U}_i(j)$. Several user-to-cache associations result in the same $\mathcal{L}$. Therefore, each $\mathcal{L}$ represents a class of $\mathcal{U}$. Let the helper cache accessed by any user $k \in [K]$ be denoted as $\lambda_k$. For any two users $k_1$ and $k_2$, if $\lambda_{k_1} = \lambda_{k_2}$ then users $k_1$ and $k_2$ are accessing the same cache.
		
	\item \textit{User Cache Placement phase}: Once $\mathcal{U}$ is known to the server, there is an additional phase where the server fills each of the user's dedicated cache with random keys satisfying the memory constraint. The contents stored in the $k^{th}$ user's cache is denoted as $\mathcal{Z}_k$ and $\mathcal{Z}_{[1:K]}=(\mathcal{Z}_1, \mathcal{Z}_2,\ldots, \mathcal{Z}_{K})$ denotes the set of all users' dedicated cache contents. User $k \in [K]$ having access to $\mathcal{Z}_k$ and ${Z}_{\lambda_k}$ should not get any information about $W_{[1:N]}$. That is, 
	\begin{equation}
	 I(W_{[1:N]};{Z}_{\lambda_k},\mathcal{Z}_k) = 0 , \textrm{\hspace{0.05cm}} \forall k \in [K].
	 \label{ussec}
	\end{equation}
     
     \item \textit{Delivery Phase}: In this phase, each user demands one of the $N$ files. The indices of the demanded files are denoted by random variables. Let $\mathcal{D}_k$ be a random variable denoting the $k^{th}$ user's demand. Then, $\mathcal{D}=\{\mathcal{D}_1,\mathcal{D}_2,\ldots,\mathcal{D}_k\}$ is a set of $K$ independent random variables, each uniformly distributed over the set $[N]$. Let $\mathbf{d}=(d_1,d_2,\ldots,d_K)$ be a realization of $\mathcal{D}$. Upon receiving the demand vector $\mathbf{d}$, the server makes a transmission $X$ of size $R_sB$ bits over the shared link to the users, where $X$ is a function of the association profile $\mathcal{L}$, $W_{[1:N]}$, $Z_{[1:\Lambda]}$ and $\mathcal{Z}_{[1:K]}$. Each user $k \in [K]$ must be able to decode its demanded file $W_{d_k}$ using the transmission $X$ and its available cache contents $Z_{\lambda_k}$ and $\mathcal{Z}_k$  and should not obtain any information about the remaining $N-1$ files. That is,
 
       \begin{align}
         H(W_{d_k}|X,Z_{\lambda_k},\mathcal{Z}_k) & = 0  \textrm{\hspace{0.2cm}}\forall k \in [K], \textrm{\hspace{0.2cm}and}  \label{decod}\\
         I(W_{[1:N] \backslash d_k}; X,Z_{\lambda_k},\mathcal{Z}_{k}) & =0 \textrm{\hspace{0.2cm}}\forall k \in [K]. \label{secrecy}
         \end{align}
     
  \end{enumerate}
       
       For a given association profile $\mathcal{L}$, the worst-case rate corresponds to 
       $\underset{\mathcal{D}\in[N]^{K}}{\textrm{max}}R_s$. We aim to minimize the worst-case rate $R_s$ under the decodability and secrecy conditions mentioned in \eqref{decod} and \eqref{secrecy}, respectively.
       
       \begin{defn}
       	 For the above shared cache setting, a memory-rate pair $((M,M_U),R_s)$ is said to be secretively achievable if there exists a scheme for the memory point $(M,M_U)$ that satisfies the decodability condition in \eqref{decod} and the secrecy condition in \eqref{secrecy} with a rate less than or equal to $R_s$ for every possible realization of $\mathcal{D}$. The optimal rate-memory trade-off under secrecy condition is defined as 
       	 \begin{equation*}
       	 \begin{aligned}
       	  R_s^{*}&(M,M_U) = \\&inf\{R_s: ((M,M_U),R_s) \textrm{\textit{\hspace{0.1cm}is secretively achievable}}\}.
       	  \end{aligned}
       	 \end{equation*}
       \end{defn}

\section{Main Results}
\label{sec:mainres}
Before presenting the main results, we first discuss the relevance of the dedicated user cache in our setting and show that the user cache must have a size of at least one file to ensure secrecy in any achievable coded caching scheme for shared cache system.

In a shared cache network, several users will be sharing the same cache contents, hence multicasting opportunities cannot be created amongst the users accessing the same helper cache. Consider user $k$ connected to the helper cache $\lambda$. The transmissions that are useful for the $k^{th}$ user can be  decoded by the remaining users in $\mathcal{U}_{\lambda}$.
Therefore, to ensure secrecy for the file content that user $k$ has requested against the users in $\mathcal{U}_{\lambda}$, the transmissions need to be encrypted using one-time pads that are known to only user $k$ and unknown to other users in $\mathcal{U}_{\lambda}$. To store these random keys, each user needs a dedicated memory unit in addition to the helper cache that it is accessing. As mentioned earlier, each user cache has a capacity to store $M_U$ files, and the condition $M_U \geq 1$ needs to be satisfied to achieve a secretive coded caching scheme for shared caches. The formal proof of it is given below.

Consider a cache $\lambda \in [\Lambda]$ which has more than one user connected to, say $\lambda=1$. Let $\mathbf{d}=(d_1,d_2,\ldots,d_{\mathcal{L}_1},0,\ldots,0)$ be a demand vector where only the users in $\mathcal{U}_1$ demand a file, and let
$X$ be the corresponding transmission made by the server. Choose a user $k \in \mathcal{U}_1$. Then,

\begin{subequations}
  \begin{align}
    \mathcal{L}_1B & \leq H(W_{\mathbf{d}}) \notag \\
                   & \leq I(W_{\mathbf{d}}; X, Z_1, \mathcal{Z}_{[1:\mathcal{L}_1]} ) + H(W_{\mathbf{d}}| X, Z_1, \mathcal{Z}_{[1:\mathcal{L}_1]}) \notag \\
                   & \leq I(W_{\mathbf{d}}; X, Z_1, \mathcal{Z}_{[1:\mathcal{L}_1]} ) \label{u1}\\
                   & \leq I(W_{\mathbf{d}}; X, Z_1, \mathcal{Z}_{k}) \textrm{\hspace{0.1cm}} + \notag \\
                   & \textrm{\hspace{1.5cm}} I(W_{\mathbf{d}}; \mathcal{Z}_{[1:\mathcal{L}_1]\backslash \{k\}} | X, Z_1, \mathcal{Z}_{k} ) \label{u2}\\
                   & \leq I(W_{\mathbf{d}\backslash \{d_k\}}; X, Z_1, \mathcal{Z}_{k}) \textrm{\hspace{0.1cm}} + \notag \\
                   & \quad \quad \quad I(W_{{d_k}}; X, Z_1, \mathcal{Z}_{k} | W_{\mathbf{d}\backslash \{d_k\}}) \textrm{\hspace{0.1cm}} +  \notag \\
                   &  \textrm{\hspace{1.5cm}}  I(W_{\mathbf{d}}; \mathcal{Z}_{[1:\mathcal{L}_1]\backslash \{k\}} | X, Z_1, \mathcal{Z}_{k} ) \label{u3} \\
                   & \leq I(W_{{d_k}}; X, Z_1, \mathcal{Z}_{k} | W_{\mathbf{d}\backslash \{d_k\}}) \textrm{\hspace{0.1cm}} + \notag \\
                   &  \textrm{\hspace{1.5cm}} I(W_{\mathbf{d}}; \mathcal{Z}_{[1:\mathcal{L}_1]\backslash \{k\}} | X, Z_1, \mathcal{Z}_{k} ) \label{u4}\\
                   & \leq H(W_{{d_k}}) + H(\mathcal{Z}_{[1:\mathcal{L}_1]\backslash \{k\}}) \notag \\
                   & \leq B + B(\mathcal{L}_1-1)M_U , \notag
    \end{align}
\end{subequations}

\noindent where \eqref{u1} follows from \eqref{decod}, \eqref{u2} and \eqref{u3} follow from the chain rule of mutual information and \eqref{u4} follows from \eqref{secrecy}. Thus, we obtain $M_U \geq 1$. It is sufficient to consider $M_U$ as unity as users' individual caches are used only for storing the random keys that are used to encrypt those transmissions in which the user is involved. Therefore, in our further discussion, we fix $M_U = 1$, as taken in \cite{MeR} and the shared caching problem described in Section \ref{sec:prob_setup} is referred to as $(\Lambda, K, M, N)$ shared caching problem henceforth.

The following theorem presents a secretive coded caching scheme for shared caches obtained using PDAs.

\begin{thm}
	\label{thm1}
	For a given $(\Lambda, K, M, N)$ shared caching problem with an association profile $\mathcal{L}=(\mathcal{L}_1, \mathcal{L}_2,\ldots, \mathcal{L}_{\Lambda})$, a secretive coded caching scheme with sub-packetization level $F$ can be derived from a $(\Lambda, F, Z,S)$ PDA $\mathbf{P} =(p_{j,\lambda})_{F \times \Lambda}$ satisfying $\frac{Z}{F}=\frac{M}{M+N}$. The secretively achievable worst-case rate is obtained as
	\begin{equation}
	   R_s = \frac{\displaystyle\sum_{s \in [S]} \mathcal{L}_{\tau_s}}{F-Z}
	   \label{eq:thm1}
	\end{equation}
	\noindent where, $\tau_s \triangleq min\{\lambda \in [\Lambda]: s \in \mathbf{p}_{\lambda}\}$ , $\forall s \in [S]$.
\end{thm}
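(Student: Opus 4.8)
The plan is to build the scheme in three layers—helper-cache placement, user-cache placement, delivery—mirroring the non-secretive PDA construction of Lemma~\ref{lemm:1} but inserting a secret-sharing layer to enforce \eqref{plsec}, \eqref{ussec} and \eqref{secrecy}. First I would apply an $(m,n)$ non-perfect secret sharing scheme to each file $W_n$ with $n = F$ shares and $m = Z$, so that each file is replaced by shares $\{S_{n,j} : j \in [F]\}$, each of size $\frac{B}{F-Z}$ bits (this is the smallest possible by the bound quoted after Definition~\ref{def:secretshare}, and exists for $B$ large). The helper cache placement then exactly copies the PDA placement rule of Algorithm~1 applied to the \emph{shares} rather than the file packets: helper cache $\lambda$ stores $\{S_{n,j} : p_{j,\lambda} = \star,\, \forall n \in [N]\}$. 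By condition \textit{C1} this is $Z$ shares per file, so the memory used is $N Z \cdot \frac{B}{F-Z} = \frac{NZ}{F-Z}B$; imposing the normalized memory $M = \frac{NZ}{F-Z}$ gives precisely $\frac{Z}{F} = \frac{M}{M+N}$, matching the hypothesis. Since only $Z = m$ shares of each file are exposed to any helper cache, \eqref{share1} gives $I(W_{[1:N]}; Z_\lambda) = 0$, establishing \eqref{plsec}.

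Next I would handle the user caches and delivery together, since the keys are dimensioned by the delivery structure. For each integer $s \in [S]$, the PDA's delivery step (line 8 of Algorithm~1, now over shares) would transmit $\bigoplus_{p_{j,\lambda}=s} S_{d_{\lambda},j}$ if we had one user per cache; in the shared-cache setting, cache $\lambda$ serves $\mathcal{L}_\lambda$ users $\mathcal{U}_\lambda(1),\dots,\mathcal{U}_\lambda(\mathcal{L}_\lambda)$ with possibly distinct demands, so each "slot" occupied by an integer $s$ in column $\lambda$ must be resolved once per user of that cache. The key observation is that a given integer $s$ need only be serviced for the $\mathcal{L}_{\tau_s}$ users attached to the \emph{smallest-indexed} cache in which $s$ appears (where $\tau_s = \min\{\lambda : s \in \mathbf{p}_\lambda\}$); users of other caches in which $s$ appears recover their needed share from that same transmission using their cached shares, exactly as condition \textit{C3} guarantees decodability in the base PDA scheme—this is the same "leader" idea used in \cite{PeR} for non-secretive shared-cache PDA schemes. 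I would make each such transmission secure against the co-located users by XORing in a fresh random key stored only in the intended user's dedicated cache: for each $s$ and each of the $\mathcal{L}_{\tau_s}$ users at cache $\tau_s$, one key of size $\frac{B}{F-Z}$ bits. Counting keys per user: a user at cache $\lambda$ participates, as an intended recipient, in exactly those transmissions whose integer $s$ has $\tau_s = \lambda$, and for each such $s$ it needs the shares indexed by the rows $j$ with $p_{j,\lambda}=s$; a routine count (each column has $F-Z$ non-star entries, distributed among the integers appearing in it) shows the total key size per user is at most $\frac{F-Z}{F-Z}B = B$, consistent with $M_U = 1$; I would verify this bound carefully. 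The number of distinct transmitted symbols is $\sum_{s\in[S]} \mathcal{L}_{\tau_s}$, each of size $\frac{B}{F-Z}$ bits, giving the rate $R_s = \frac{\sum_{s\in[S]}\mathcal{L}_{\tau_s}}{F-Z}$ as claimed in \eqref{eq:thm1}.

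Finally I would check the three information-theoretic constraints on the delivery. Decodability \eqref{decod}: user $k$ at cache $\lambda_k$ cancels its own key from each transmission addressed to it, then peels off the cached shares (those $S_{n,j}$ with $p_{j,\lambda_k}=\star$) using the PDA property \textit{C3}, recovering all $F$ shares of $W_{d_k}$—the $Z$ cached ones plus the $F-Z$ decoded ones—and reconstructs $W_{d_k}$ via \eqref{share2}. Secrecy \eqref{secrecy}: the transmission $X$ together with $(Z_{\lambda_k},\mathcal{Z}_k)$ reveals to user $k$, for each file $n \neq d_k$, only the $Z$ shares cached at $\lambda_k$—every other share of $W_n$ appears in $X$ only inside an XOR with a key the user does not hold (here I would argue that the random keys act as one-time pads that are independent across transmissions and independent of the files, so no additional shares leak), and $Z = m$ shares reveal nothing by \eqref{share1}; combined with \eqref{plsec} this yields \eqref{ussec} and \eqref{secrecy}. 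The main obstacle I anticipate is the per-user key accounting and the accompanying one-time-pad independence argument: one must be sure that (i) the keys fit within $M_U = 1$ for \emph{every} association profile, which needs the "leader cache" trick so that a user is billed only for the integers $s$ with $\tau_s = \lambda_k$, and (ii) the keys used in the $\sum_s \mathcal{L}_{\tau_s}$ transmissions are mutually independent and independent of $W_{[1:N]}$ and of the shares' randomness, so that the secrecy mutual informations collapse cleanly; getting the bookkeeping and the conditioning order right in the entropy chain is the delicate part, whereas the decodability half is essentially inherited from Lemma~\ref{lemm:1}.
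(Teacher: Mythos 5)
Your overall architecture --- a $(Z,F)$ non-perfect secret sharing of each file into $F$ shares of size $B/(F-Z)$, star-based placement of the shares in the helper caches, one one-time-pad key per transmission, and counting $\mathcal{L}_{\tau_s}$ transmissions per integer $s$ --- is the same as the paper's, and your memory and rate calculations agree with \eqref{eq:thm1}. However, there is a genuine gap in your user-cache key placement. You store the key for each transmission ``only in the intended user's dedicated cache'' and you bill a user at cache $\lambda$ ``only for the integers $s$ with $\tau_s = \lambda$.'' Under that assignment, take an integer $s$ that appears in a column $\lambda'$ of $\mathbf{P}$ with $\tau_s \neq \lambda'$, say $p_{j',\lambda'} = s$. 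The $i$-th user of cache $\lambda'$ is missing the share $S_{d_k,j'}$ and must extract it from the transmission $\mathcal{K}_{(s,i)} \oplus \bigoplus_u S_{d_{k_u},j_u}$; cancelling the cached shares via condition \textit{C3} is not enough --- it must also cancel $\mathcal{K}_{(s,i)}$, which you have not given it. So decodability \eqref{decod} fails for every non-leader user of every non-leader cache.

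The fix, which is what Algorithm 2 of the paper does, is to store the key $\mathcal{K}_{(s,i)}$ in the dedicated cache of the $i$-th user of \emph{every} cache $\lambda$ with $s \in \mathbf{p}_\lambda$ and $\mathcal{L}_\lambda \geq i$; that is, user $\mathcal{U}_\lambda(i)$ stores $\{\mathcal{K}_{(s,i)} : s \in \mathbf{p}_\lambda\}$. The budget worry that pushed you to the leader-only assignment is unfounded: by condition \textit{C3}(a) an integer cannot occur twice in the same column, so each column contains exactly $F-Z$ distinct integers and every user stores exactly $F-Z$ keys of $B/(F-Z)$ bits each, i.e., exactly $B$ bits, meeting $M_U=1$ with equality for every association profile --- no leader trick is needed for the key budget (the leader index $\tau_s$ enters only the rate count). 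Sharing one key among all users served by a given transmission does not hurt secrecy, because the only share a served user can newly extract is the one of its own demanded file: all other shares in that XOR sit in star positions of its own column and are already in its helper cache. Relatedly, your secrecy sentence ``every other share of $W_n$ appears in $X$ only inside an XOR with a key the user does not hold'' is not quite right --- shares of other files do appear in transmissions whose key the user holds, but those are shares it already caches, so no new information leaks; this is the argument you need to make.
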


	The scheme that achieves performance in \eqref{eq:thm1} is presented in Section~\ref{sec:secretive}.

Note that the rate $R_s$ varies according to the association profile $\mathcal{L}$ for a given shared caching problem. For a uniform profile, the rate $R_s$ is minimum and as the profile becomes more and more skewed, the rate increases. This follows from the description of the scheme.

\begin{corollary}
     \label{corr}
	 For a uniform association profile, $\mathcal{L}=(\frac{K}{\Lambda},\frac{K}{\Lambda},\ldots, \frac{K}{\Lambda})$, the worst-case rate becomes
	 \begin{equation}
	   R_s = \frac{K. S}{\Lambda(F-Z)}.
	   \label{eq:corr}
	 \end{equation}
\end{corollary}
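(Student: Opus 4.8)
The plan is to specialize the general rate formula \eqref{eq:thm1} of Theorem~\ref{thm1} to the uniform association profile $\mathcal{L}=(\tfrac{K}{\Lambda},\tfrac{K}{\Lambda},\ldots,\tfrac{K}{\Lambda})$ and simplify. Recall that for each integer $s\in[S]$ the quantity $\tau_s$ is defined as $\tau_s \triangleq \min\{\lambda\in[\Lambda]: s\in\mathbf{p}_\lambda\}$, i.e.\ the index of the first column of the PDA $\mathbf{P}$ in which the integer $s$ occurs; condition \textit{C2} of Definition~\ref{def:pda} guarantees that this set is non-empty, so $\tau_s$ is well defined for every $s$.

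The key observation is simply that under a uniform profile, $\mathcal{L}_\lambda = \tfrac{K}{\Lambda}$ for \emph{every} $\lambda\in[\Lambda]$, and in particular $\mathcal{L}_{\tau_s} = \tfrac{K}{\Lambda}$ regardless of the value of $\tau_s$. Hence the numerator of \eqref{eq:thm1} becomes
\begin{equation*}
\sum_{s\in[S]}\mathcal{L}_{\tau_s} \;=\; \sum_{s\in[S]}\frac{K}{\Lambda} \;=\; \frac{K\,S}{\Lambda}.
\end{equation*}
Substituting this back into \eqref{eq:thm1} gives $R_s = \dfrac{K S}{\Lambda(F-Z)}$, which is exactly \eqref{eq:corr}. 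One small point worth noting in passing is that the claim implicitly assumes $\Lambda \mid K$ so that $\tfrac{K}{\Lambda}$ is an integer and the profile is genuinely realizable; this is the standard convention for the uniform case and is inherited from the setup.

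There is essentially no obstacle here — the corollary is an immediate substitution into the already-established Theorem~\ref{thm1}, and the only thing to be careful about is that $\mathcal{L}_{\tau_s}$ does not depend on which cache $\tau_s$ points to once the profile is flat, so the sum over $s\in[S]$ collapses to $S$ copies of $\tfrac{K}{\Lambda}$. The entire content of the proof is the display above; no properties of PDAs beyond the well-definedness of $\tau_s$ are needed.
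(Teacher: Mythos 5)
Your proof is correct and follows exactly the paper's own argument: since the profile is uniform, $\mathcal{L}_{\tau_s}=\tfrac{K}{\Lambda}$ for every $s\in[S]$, so the sum in \eqref{eq:thm1} collapses to $\tfrac{KS}{\Lambda}$ and \eqref{eq:corr} follows. The extra remarks on the well-definedness of $\tau_s$ and the implicit assumption $\Lambda \mid K$ are harmless additions but not needed.
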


\begin{proof}
	When the profile is uniform, $\forall s \in [S]$, $\mathcal{L}_{\tau_s}=\frac{K}{\Lambda}$. Thus, \eqref{eq:thm1} reduces to 
	the expression in  \eqref{eq:corr}  .
\end{proof}

The following theorem provides an information-theoretic lower bound on the rate $R_s$ achievable by any secretive coded caching 
scheme for shared cache networks.

\begin{thm}
	\label{thmsecrecy}
	For any $\mathcal{L}$, $M \geq 0$ and $M_U \geq 1$, the achievable secretive rate for a shared cache system is lower bounded by 
	
	\begin{equation}
	\begin{aligned}
	  & R^{*}_s(M,M_U)  \geq  \\
	 &   \max_{s\in \{1,2,\hdots,\min(\frac{N}{2},K)\}} \small{\frac{s\floor{\frac{N}{s}}-1 -(\lambda_s-1)M-(s-1)M_U}{\floor{\frac{N}{s}}-1}}. \\
	   \label{eq:cutset}
	  \end{aligned}
	\end{equation}
	
\end{thm}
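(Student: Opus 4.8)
The plan is a cut-set argument engineered around the two secrecy constraints \eqref{decod}--\eqref{secrecy}. Fix an integer $s\in\{1,\ldots,\min(\tfrac{N}{2},K)\}$ and put $\ell:=\floor{N/s}$, so that $\ell\ge 2$ and $s\ell\le N$. Since the profile is ordered, $\mathcal{L}_1\ge\mathcal{L}_2\ge\cdots\ge\mathcal{L}_\Lambda$, set $\lambda_s:=\min\{i\in[\Lambda]:\mathcal{L}_1+\cdots+\mathcal{L}_i\ge s\}$, the least number of helper caches whose users together number at least $s$. One can then pick $s$ users $u_1,\ldots,u_s$ lying behind caches $1,\ldots,\lambda_s$, and I write $\mathcal{C}=\{\lambda_{u_1},\ldots,\lambda_{u_s}\}$ for the set of helper caches they access, with $|\mathcal{C}|=\lambda_s$. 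I then consider $\ell$ demand vectors under which $u_1,\ldots,u_s$ collectively request a set $\mathcal{A}$ of $s\ell$ \emph{distinct} files, let $X^{(1)},\ldots,X^{(\ell)}$ be the corresponding server transmissions (each a deterministic function of $(W_{[1:N]},Z_{[1:\Lambda]},\mathcal{Z}_{[1:K]})$, of entropy at most $R_sB$), and let $d^{(\ell)}_{u_s}$ be the file $u_s$ requests under the last of these.

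The heart of the argument is to delete, in one stroke, user $u_s$'s \emph{entire} view under the last demand vector, namely the bundle $G:=\big(X^{(\ell)},Z_{\lambda_{u_s}},\mathcal{Z}_{u_s}\big)$. First, the secrecy condition \eqref{secrecy} gives $I\big(W_{[1:N]\setminus d^{(\ell)}_{u_s}};G\big)=0$, and since every file of $\mathcal{A}\setminus\{d^{(\ell)}_{u_s}\}$ differs from $d^{(\ell)}_{u_s}$, this forces the residual entropy $H\big(W_{\mathcal{A}\setminus\{d^{(\ell)}_{u_s}\}}\mid G\big)$ to equal its unconditioned value $(s\ell-1)B$ (the files being independent and uniform). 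Second, the decodability condition \eqref{decod} makes each requested file a function of the corresponding transmission together with the requesting user's two caches, so the $s\ell-1$ files of $\mathcal{A}\setminus\{d^{(\ell)}_{u_s}\}$ are jointly a function of $G$ together with the remaining transmissions $X^{(1)},\ldots,X^{(\ell-1)}$, the remaining helper caches $Z_{\mathcal{C}\setminus\{\lambda_{u_s}\}}$ (there are $\lambda_s-1$ of them), and the remaining individual caches $\mathcal{Z}_{u_1},\ldots,\mathcal{Z}_{u_{s-1}}$. Combining the two facts and bounding $H(X^{(j)})\le R_sB$, $H(Z_c)\le MB$, $H(\mathcal{Z}_{u_i})\le M_UB$ gives
\[
(s\ell-1)B \;=\; H\big(W_{\mathcal{A}\setminus\{d^{(\ell)}_{u_s}\}}\mid G\big)\;\le\;(\ell-1)R_sB+(\lambda_s-1)MB+(s-1)M_UB .
\]
Rearranging yields $R_s\ge\big(s\floor{N/s}-1-(\lambda_s-1)M-(s-1)M_U\big)/\big(\floor{N/s}-1\big)$; since this holds for every secretively achievable $R_s$ and every admissible $s$, taking the infimum over schemes and the maximum over $s\in\{1,\ldots,\min(\tfrac{N}{2},K)\}$ gives \eqref{eq:cutset}.

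\textbf{Where the work is.} Conceptually everything hinges on the bundling step: by \eqref{secrecy} the bundle $G$ reveals, among the files of $\mathcal{A}$, information only about the single file $W_{d^{(\ell)}_{u_s}}$, so discarding $G$ simultaneously removes one transmission, one helper cache, and one individual cache from the cut while sacrificing only one file's worth of entropy — this is precisely what produces the sharper denominator $\floor{N/s}-1$ (rather than $\floor{N/s}$) in \eqref{eq:cutset}, and it is the step one would not see from the non-secretive cut-set alone. The remaining points require shared-cache bookkeeping: one must verify that the $s$ users can indeed be chosen to access \emph{exactly} $\lambda_s$ distinct helper caches (using that $\mathcal{L}$ is ordered, that $\mathcal{L}_1+\cdots+\mathcal{L}_{\lambda_s-1}<s$, and hence $\lambda_s\le s$, so the $s$ users can be spread with at least one per cache over all $\lambda_s$ caches), so that after removing $Z_{\lambda_{u_s}}$ precisely $\lambda_s-1$ helper caches survive; and one must check the $\ell$-round demand pattern with $s\ell$ distinct files is realizable, which is guaranteed since $s\le N/2$ forces $\ell\ge 2$ (keeping $\ell-1>0$) and $s\ell=s\floor{N/s}\le N$.
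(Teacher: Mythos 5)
Your proposal is correct and follows essentially the same cut-set argument as the paper: the same cut consisting of $\floor{N/s}$ transmissions, $s$ user caches and $\lambda_s$ helper caches, with the same key step of using the secrecy condition to discard one user's entire view (one transmission, one helper cache, one user cache) at the cost of only one file's entropy, which yields the $-1$ terms and the denominator $\floor{N/s}-1$. The only difference is presentational — you phrase the step as $H(\widehat{W}\mid G)=H(\widehat{W})$ via secrecy and bound it by $H(\text{rest})$ via decodability, whereas the paper writes the equivalent chain-rule decomposition $I(\widehat{W};\text{full cut})=I(\widehat{W};G)+I(\widehat{W};\text{rest}\mid G)$ with the first term vanishing.
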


	The proof is given in Section~\ref{privateappendix}.

The lower bound expression in \eqref{eq:cutset} has a parameter $s$, which indicates the number of users under consideration. The term $\lambda_s$ corresponds to the cache to which the $s^{th}$ user is connected to. In our setting, $M_U$ is fixed as unity and we define $R_s^{*}(M)$ as $R_s^{*}(M,M_U)|_{M_U=1}$.

\begin{corollary}
		For any $\mathcal{L}$, $M \geq 0$ and $M_U = 1$, the achievable secretive rate is lower bounded by 
	
	\begin{equation}
	R^{*}_s (M) \geq \max_{s\in \{1,2,\hdots,\min(\frac{N}{2},K)\}} s-\frac{(\lambda_s-1)M}{\floor{\frac{N}{s}}-1}.
	\label{eq:cutset2}
	\end{equation}
	
\end{corollary}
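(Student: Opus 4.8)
The plan is to obtain \eqref{eq:cutset2} as an immediate specialization of Theorem~\ref{thmsecrecy} to the regime $M_U=1$. Recall that in our setting $R_s^{*}(M)$ is defined precisely as $R_s^{*}(M,M_U)\big|_{M_U=1}$, so it suffices to substitute $M_U=1$ into the right-hand side of~\eqref{eq:cutset} and simplify the resulting expression, index $s$ by index $s$, over the same range $\{1,2,\ldots,\min(N/2,K)\}$.

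Concretely, for a fixed $s$ the numerator of the bound in~\eqref{eq:cutset} becomes $s\floor{\tfrac{N}{s}}-1-(\lambda_s-1)M-(s-1)$. I would first group the terms that do not involve $M$: since $s\floor{\tfrac{N}{s}}-1-(s-1)=s\floor{\tfrac{N}{s}}-s=s\bigl(\floor{\tfrac{N}{s}}-1\bigr)$, the numerator equals $s\bigl(\floor{\tfrac{N}{s}}-1\bigr)-(\lambda_s-1)M$. Dividing by the denominator $\floor{\tfrac{N}{s}}-1$ then yields $s-\dfrac{(\lambda_s-1)M}{\floor{\tfrac{N}{s}}-1}$, and taking the maximum over $s\in\{1,2,\ldots,\min(N/2,K)\}$ gives exactly~\eqref{eq:cutset2}.

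I do not expect any genuine obstacle: the corollary is a purely algebraic consequence of the general bound. The only point deserving an explicit line of justification is that the division above is legitimate and preserves the inequality, which holds because $\floor{\tfrac{N}{s}}-1\ge 1>0$ for every $s\le N/2$; this also validates the cancellation $s\bigl(\floor{\tfrac{N}{s}}-1\bigr)\big/\bigl(\floor{\tfrac{N}{s}}-1\bigr)=s$. Hence the proof reduces to stating the substitution, performing this one-line simplification, and invoking Theorem~\ref{thmsecrecy}.
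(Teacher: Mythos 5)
Your proposal is correct and matches the paper's approach: the paper's own proof simply states that the bound follows from~\eqref{eq:cutset} by setting $M_U=1$, and your explicit simplification $s\floor{\tfrac{N}{s}}-1-(s-1)=s\bigl(\floor{\tfrac{N}{s}}-1\bigr)$ followed by division by the positive quantity $\floor{\tfrac{N}{s}}-1$ is exactly the omitted algebra. Nothing further is needed.
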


\begin{proof}
	The lower bound in \eqref{eq:cutset2} follows directly from \eqref{eq:cutset} after letting $M_U =1$.
\end{proof}

 When $M=0$, the server generates a set of $K$ independent random keys $\{\mathcal{K}_1, \mathcal{K}_2, \ldots, \mathcal{K}_K\}$, each uniformly distributed over $[2^B]$. Then in the user cache placement phase, the key $\mathcal{K}_k$ is placed in the $k^{th}$ user's cache. That is, $\mathcal{Z}_k=\mathcal{K}_k$. In the delivery phase, the server transmits $W_{d_k} \oplus \mathcal{K}_k$, $\forall k \in [K]$ to satisfy the demands of the users. Thus, we obtain $R_s(M=0) = K$. It is straightforward to see that the conditions in \eqref{decod} and \eqref{secrecy} are satisfied by the above transmissions. Therefore, $R_s(M) \leq K$ for $M \geq 0$. The following theorem demonstrates the order-optimality of the obtained scheme.
 
 \begin{thm}
 	\label{thm3}
 	For $M \geq 0$ and $M_U = 1$, if $N \geq 2K$, the rate achieved by the secretive coded caching scheme obtained from PDAs is within the optimal rate by a factor $\Lambda$ which is a system parameter. i.e.,
 	\begin{equation}
 	 	  1 \leq \frac{R_s(M)}{R^{*}_s(M)} \leq \Lambda.
 	\end{equation}
 \end{thm}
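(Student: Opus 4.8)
The plan is to bound the ratio $R_s(M)/R_s^*(M)$ from above and below for all $M\ge 0$ under the assumption $N\ge 2K$. The lower bound $R_s(M)/R_s^*(M)\ge 1$ is immediate by definition of $R_s^*(M)$ as an infimum over achievable rates, since the PDA-based scheme of Theorem~\ref{thm1} is itself secretively achievable. So the work is entirely in the upper bound, and I would split it according to the two ways in which the achievable rate can be controlled. First, recall from the discussion preceding Theorem~\ref{thm3} that $R_s(M)\le K$ for every $M\ge 0$, using the trivial $M=0$ scheme (one-time pad per user). Second, on the side of the lower bound, I would use Corollary following Theorem~\ref{thmsecrecy}, i.e.\ inequality \eqref{eq:cutset2}, evaluated at a suitably chosen value of the parameter $s$.

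The key step is to pick $s=K$ in \eqref{eq:cutset2}; this is legal precisely because $N\ge 2K$ guarantees $K\le\min(N/2,K)$. With $s=K$ the bound reads
\begin{equation*}
R_s^*(M)\ \ge\ K-\frac{(\lambda_K-1)M}{\floor{N/K}-1}.
\end{equation*}
Now $\lambda_K$ is the index of the helper cache to which the $K^{\text{th}}$ user is connected, so $\lambda_K\le\Lambda$, and therefore $(\lambda_K-1)M\le(\Lambda-1)M$. Since $N\ge 2K$ we also have $\floor{N/K}\ge 2$, hence $\floor{N/K}-1\ge 1$. I would next observe that the quantity $K-\frac{(\lambda_K-1)M}{\floor{N/K}-1}$ need not stay positive for large $M$, so the clean way to proceed is to combine it with the trivial observation that $R_s^*(M)\ge 0$ and, more usefully, to note that for the range of $M$ where the cut-set bound is at least $K/\Lambda$ we are done directly, while for larger $M$ we must argue that the achievable rate itself has already dropped.

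The cleaner route I would actually take: show $R_s(M)\le \Lambda\, R_s^*(M)$ by establishing $R_s^*(M)\ge R_s(M)/\Lambda$. Using $R_s(M)\le K$ it suffices to show $R_s^*(M)\ge K/\Lambda$ for the relevant memory range, and for the complementary range to exhibit a matching achievable scheme. Concretely, from the $s=K$ cut-set bound, $R_s^*(M)\ge K-(\Lambda-1)M/(\floor{N/K}-1)\ge K-(\Lambda-1)M$ when $N\ge 2K$; this is at least $K/\Lambda$ whenever $M\le \frac{K(\Lambda-1)}{\Lambda(\Lambda-1)}=\frac{K}{\Lambda}$ — I would double-check this threshold computation carefully, as the exact constant is where a factor could slip. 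For $M> K/\Lambda$, one invokes Theorem~\ref{thm1} with an explicit low-subpacketization PDA (e.g.\ the MN-type PDA giving $F=\binom{\Lambda}{t}$, $Z=\binom{\Lambda-1}{t-1}$, $S=\binom{\Lambda}{t+1}$ with $t=\Lambda M/(M+N)$ rounded appropriately) to show the achievable $R_s(M)$ is already small enough that the ratio to the cut-set bound stays below $\Lambda$; here the per-profile rate \eqref{eq:thm1} under a uniform profile, $R_s=\frac{KS}{\Lambda(F-Z)}$ from Corollary~\ref{corr}, is the object to bound.

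The main obstacle I anticipate is the bookkeeping in this second regime: one must show that for $M>K/\Lambda$ the achievable rate from the PDA scheme and the cut-set lower bound with a well-chosen $s<K$ stay within a factor $\Lambda$ of each other, uniformly in $M$ and in the association profile $\mathcal{L}$ (the worst-case profile makes $R_s$ larger, so one should bound $R_s$ under the worst profile, i.e.\ $\mathcal{L}_{\tau_s}\le\mathcal{L}_1$, against a lower bound that does not see the profile at all). Matching the $\floor{N/s}-1$ denominators in \eqref{eq:cutset2} against the $F-Z$ denominator in \eqref{eq:thm1} is the delicate part, and the hypothesis $N\ge 2K$ is exactly what is needed to keep those floor terms from degenerating. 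I expect the final argument to reduce, after choosing $s$ optimally in each regime, to a handful of elementary inequalities of the form $\frac{K}{\Lambda}\le R_s^*(M)\le R_s(M)\le K$, closing the chain $1\le R_s(M)/R_s^*(M)\le \Lambda$.
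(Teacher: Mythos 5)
There is a genuine gap in the upper-bound half of your argument. Your choice of $s=K$ in \eqref{eq:cutset2} leaves the term $\frac{(\lambda_K-1)M}{\floor{N/K}-1}$ alive (indeed $\lambda_K=\Lambda$ whenever the last cache is occupied), which forces you into a two-regime analysis in $M$; you establish the regime $M\le K/\Lambda$ but leave the regime $M>K/\Lambda$ as a sketch, and that sketch is precisely where the difficulty lies. Showing that for large $M$ the PDA-based achievable rate \eqref{eq:thm1} stays within a factor $\Lambda$ of a cut-set bound with some smaller $s$, uniformly over $M$, over the association profile $\mathcal{L}$, and over the choice of PDA, is not a matter of bookkeeping that can be waved through --- the achievable rate depends on which $(\Lambda,F,Z,S)$ PDA is used, and nothing in your argument controls it from above other than the trivial bound $R_s(M)\le K$, which does not improve as $M$ grows. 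As written, the proof does not close.

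The paper avoids the regime split entirely by a different choice of $s$: it sets $s=\mathcal{L}_1$, the number of users attached to the most populated helper cache. Since the users are ordered so that the first $\mathcal{L}_1$ of them all access cache $1$, one has $\lambda_{\mathcal{L}_1}=1$, so the memory-dependent term $(\lambda_s-1)M$ in \eqref{eq:cutset2} vanishes \emph{identically}, giving $R^{*}_s(M)\ge \mathcal{L}_1$ for every $M\ge 0$ (the hypothesis $N\ge 2K$ guarantees $s=\mathcal{L}_1\le K\le N/2$ is an admissible choice and that $\floor{N/s}-1\ge 1$). Combining this with $R_s(M)\le K$ --- which you do have, and which the paper also justifies via $S\le\Lambda(F-Z)$ for any PDA --- and with $\mathcal{L}_1\ge K/\Lambda$ yields $\frac{R_s(M)}{R^{*}_s(M)}\le \frac{K}{\mathcal{L}_1}\le\Lambda$ in one step. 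The missing idea in your proposal is exactly this: choose the cut so that all $s$ users in it share a single helper cache, so that the cache memory $M$ cannot be charged against the bound at all.
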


 	 The proof is given in Section~\ref{privateappendix}.

 \section{Secretive Coded Caching Scheme for Shared caches using PDAs}
 \label{sec:secretive}
 In this section, we present a procedure to obtain secretive coded caching scheme for shared caches using PDAs. 
 
 Consider a shared cache network shown in Fig.~\ref{fig:setting}. For the given $(\Lambda, K, M, N)$ shared caching problem, choose a $(\Lambda, F, Z, S)$ PDA $\mathbf{P}$ such that $\frac{Z}{F}=\frac{M}{M+N}$. The four phases involved are described below.
 
 \subsection{Helper Cache Placement Phase} 
 \label{subsec:help_cacheplac}
 The server first splits each file in $W_{[1:N]}$ into $F-Z$ non-overlapping subfiles such that each subfile is of size, $F_s=\frac{B}{F-Z}$ bits. Then, each file is encoded using a $(Z,F)$ non-perfect secret sharing scheme. The $F$ shares of the file $W_n$ are denoted by $S_n = \{S_{n,1}, S_{n,2},\ldots, S_{n,F}\}$, where $n \in [N]$ and $|S_{n,j}|=F_s$, $\forall j \in [F]$. Let $S_{[1:N]}$ denote the set of shares of all the $N$ files.
 
 In the PDA $\mathbf{P}$, the rows represent the shares and the columns represent the helper caches. The placement of the shares in the helper caches is defined by the symbol `$\star$' in the corresponding column. That is,
 \begin{equation}
 Z_{\lambda} = \{S_{n,j}, \forall n \in [N] : {p}_{j,\lambda} = \star, j \in [F] \}
 \end{equation}
 
  By Condition $C1$ in Definition \ref{def:pda}, each helper cache stores some $Z$ shares of all the files such that the memory constraint is satisfied.

 \subsection{User-to-cache Association Phase}
  In this phase, each one of the $K$ users gets connected to one of the helper caches.  Once the user-to-cache association $\mathcal{U}$ and the profile $\mathcal{L}$ are known, construct an array, $\mathbf{G}$ of size $F \times K$  from $\mathbf{P}$ as described in lines $1$-$18$ in Algorithm $2$. The array $\mathbf{G}$ is a generalized placement delivery array defined in \cite{PeR}. In $\mathbf{G}$, the numerical entries are an ordered pair, which come from a subset of $S\times \mathcal{L}_1$, where $S \times \mathcal{L}_1 \triangleq \{ (s,i): s \in [S], i \in [\mathcal{L}_1] \}$. Each column in $\mathbf{G}$ corresponds to a user and the symbol `$\star$' represents the shares that are available to each user. Thus, each user gets access to some $Z$ shares of all the files but no information is gained about any of the $N$ files from these $Z$ shares. This follows from the $(Z,F)$ secret sharing scheme that we employ. 
 
 \subsection{User Cache Placement Phase} 
 \label{subsec:user_cacheplac}
 To ensure the secrecy constraint while retrieving the demanded file, some keys need to be stored privately in each user's cache which are essential for encrypting the transmissions. Since each user wants to decode the remaining $F-Z$ shares of its demanded file, it needs to store $F-Z$ independently and uniformly generated random keys of size $F_s$. Thus, $M_UB = {F_s . (F-Z)} = B $ bits, which is in accordance with the memory constraint assumed for the user cache. The key which is used to encrypt a particular transmission will be stored in all those users' caches which are involved in that transmission. Hence, the number of distinct keys that the server generates is same as the number of distinct ordered pairs present in $\mathbf{G}$ and each key is indexed by an ordered pair (line $20$ of Algorithm $2$). As mentioned above, each user $k \in [K]$ stores $F-Z$ keys which are indexed by the ordered pair $(s,i)$ present in the column $\mathbf{g}_k$ (described in lines $21$-$26$ of Algorithm $2$).

  \begin{algorithm}
 	\renewcommand{\thealgorithm}{2}
 	
 	\caption{User Cache Placement and Construction of $\mathbf{G}$ from the $(\Lambda,F,Z,S)$ PDA $\mathbf{P}$}
 	\begin{algorithmic}[1]
 		\Procedure{Construction of $\mathbf{G}$}{$\mathbf{P},\mathcal{U}$}   
 		\State Obtain $\mathcal{L} \gets (\mathcal{L}_1, \mathcal{L}_2,\ldots, \mathcal{L}_{\Lambda})$, $\mathcal{L}_i \geq \mathcal{L}_j$ $\forall i \leq j$ 
 		.
 		\State Construct $\mathbf{G} = (g_{j,k}),$ $j \in [F]$, $k \in [K]$
 		\State $k \gets 1$ 
 		\For {$\lambda \in [\Lambda]$}
 		\If {$\mathcal{L}_{\lambda}>0$}
 		\For {$i \in [0,\mathcal{L}_{\lambda})$}
 		\State $\mathbf{g}_k = \mathbf{p}_{\lambda}$
 		\For {$j \in [F]$}
 		\If {${g}_{j,k} \neq \star$ } 
 		\State ${g}_{j,k} = ({g}_{j,k},1)$.
 		\State ${g}_{j,k} = {g}_{j,k}+(0,i)$.
 		\EndIf
 		\EndFor
 		\State $k \gets k+1$		   		
 		\EndFor
 		\EndIf
 		\EndFor
 		\EndProcedure

 		\Procedure{User Cache Placement}{$\mathbf{G}, \mathbf{P}, F_s$}  
 		\State For every distinct $(s,i)$ in $\mathbf{G}$, server generates an independent random key $\mathcal{K}_{(s,i)}$, uniformly distributed over $2^{[F_s]}$.
 		\For{$\lambda \in [\Lambda]$}
 		\For{$i \in [\mathcal{L}_{\lambda}]$}
 		\State $k \gets \mathcal{U}_{\lambda}(i)$
 		\State $\mathcal{Z}_k = \underset{s \in [S]}{\bigcup}(\mathcal{K}_{(s,i)}: s \in \mathbf{p}_{\lambda})$
 		\EndFor
 		\EndFor
 		\EndProcedure

 	\end{algorithmic}
 \end{algorithm}

\begin{algorithm}
	\renewcommand{\thealgorithm}{3}
	\caption{Delivery Procedure}
	\hspace*{\algorithmicindent} \textbf{Input}: $\mathbf{G}$, $S_{[1:N]}$, $\mathbf{d}$ 
	\begin{algorithmic}[1]
		\For{\texttt{$s \in [S]$}}
		\For{$i \in [\mathcal{L}_1]$}
		\If {$(s,i)$ exists}
		\State Server sends $\underset{\substack{g_{j,k}=(s,i)\\ j \in [F],\textrm{\hspace{0.1cm}}k\in[K]}}{\bigoplus}S_{d_k,j} \bigoplus \mathcal{K}_{(s,i)}$
		\EndIf
		\EndFor
		\EndFor
	\end{algorithmic}
\end{algorithm}

 \subsection{Delivery Phase}
  In the delivery phase, users' inform their demands to the server. Let $\mathbf{d}=(d_1,d_2,\ldots,d_K)$ be a demand vector. Consider the worst-case scenario where all the demands are distinct. The server transmits a message corresponding to every distinct ordered pair $(s,i)$ in $\mathbf{G}$. Let $m$ denote the number of times $(s,i)$ occurs in $\mathbf{G}$. Assume $g_{j_1,k_1}=g_{j_2,k_2}=\ldots=g_{j_m,k_m}=(s,i)$. Then, the sub-array formed by the rows $j_1,j_2,\ldots,j_m$ and the columns $k_1,k_2,\ldots,k_m$ is equivalent to a scaled identity matrix $\mathbf{I}_m$ up to row or column permutations \cite{PeR} as shown in \eqref{clique}.
  \begin{center}
 	 \begin{equation}
 	\begin{aligned}
 	\begin{pmatrix}
 	(s,i) & \star & \ldots  &  \star \\
 	\star &  (s,i) & \ldots & \star   \\
 	\vdots & \vdots & \ddots & \vdots \\
 	\star & \star & \ldots & (s,i)
 	\end{pmatrix}_{m \times m}
 	\label{clique}
 	\end{aligned}
 	\end{equation} 
 \end{center}
 
 Each user $k_u$, $ u \in [m]$  has the key $\mathcal{K}_{(s,i)}$ and the set of shares $\underset{\substack{i \in [m], i \neq u}}{\bigcup}S_{d_{k_i},j_i}$ wanted by other $m-1$ users. Hence, the server transmits a message of the form $\mathcal{K}_{(s,i)} \oplus \underset{1 \leq u \leq m}{\oplus}S_{d_{k_u},j_u}$  for every distinct $(s,i)$ in $\mathbf{G}$ The delivery procedure is described in Algorithm 3.
 
 \subsection{Decoding}
   The decoding of the shares from the transmissions follows directly from \eqref{clique}. Each user has access to $Z$ shares and  obtains the remaining $F-Z$ shares of its desired file from the transmissions, by using the helper cache contents and the keys that are stored privately in its cache. Hence, each user can retrieve its demanded file from its $F$ shares as mentioned in Definition \ref{def:secretshare}.

\subsection{Proof of Secrecy}
Consider a user $k \in [K]$, and its accessible cache contents $Z_{\lambda_k}$ and $\mathcal{Z}_k$. According to the placement procedure described in Section \ref{subsec:help_cacheplac} and Section \ref{subsec:user_cacheplac}, the helper cache contents $Z_{\lambda_k}$ consist of some $Z$ shares of all the $N$ files and $\mathcal{Z}_k$ is constituted by independently and uniformly generated random keys which are used for one-time padding. By virtue of the $(Z,F)$ secret sharing scheme that is used, the $Z$ shares of a file do not reveal any information about it and the shares of one file are independent of the other. Therefore, having access to all the shares of one file do not convey any information about other files as well. Thus, we obtain:
\begin{align}
  & I(W_{[1:N]};Z_{\lambda_k}, \mathcal{Z}_k) = 0 , \\
 & I(W_{[1:N] \backslash d_k}; (Z_{\lambda_k} \cup S_{d_k}), \mathcal{Z}_k) = 0.
\end{align}

\subsection{Calculation of Rate}
 Now, we calculate the required transmission rate in the worst-case scenario. According to the delivery procedure summarized in Algorithm $3$, there is a transmission corresponding to every distinct $(s,i)$ in $\mathbf{G}$ and each transmission is of size $F_s$ bits. For each $s \in {S}$, the value that $i$ takes is different as it depends on the association profile $\mathcal{L}$. Assume $s \in [S]$ appears $m$ times in the PDA $\mathbf{P}$ and let $\{\lambda_1,\lambda_2,\ldots,\lambda_m\}$ be the set of column indices in which $s$ occurs. Then, the number of transmissions in which $s$ is involved depends on the number of users connected to the most populated cache amongst the above set of $m$ helper caches. The most populated cache in the above set corresponds to the minimum of $\{\lambda_1,\lambda_2,\ldots,\lambda_m\}$ as it is assumed that the helper caches are labelled in a non-increasing order of the number of users accessing it. Thus, we obtain the normalized rate as
 \begin{equation*}
    R_s = \frac{\displaystyle\sum_{s \in [S]} \mathcal{L}_{\tau_s}}{F-Z}
 \end{equation*}
  \noindent where, $\tau_s$ is defined as the minimum column index in which $s$ appears in the PDA $\mathbf{P}$.

This concludes the proof of Theorem \ref{thm1}.\hfill $\blacksquare$

\subsection{Example: $(\Lambda=6,K=21,M=21,N=21)$ shared caching problem with $\mathcal{L}=(6,5,4,3,2,1)$.}
Consider a setting with a server having access to $N=21$ files $W_{[1:21]}=\{W_1,W_2,\ldots,W_{21}\}$, each of size $B$ bits. The server is connected to $K=21$ users, each possessing a cache of size $M_U=1$ file. There are $\Lambda =6$ helper caches, each of size equal to $M=6$ files. For this setting, we start with a $(\Lambda=6, F =4, Z=2, S=4)$ PDA $\mathbf{P}$ given in \eqref{ex:pda}, which satisfies the condition $\frac{Z}{F}=\frac{M}{M+N}$.

\begin{equation}
 \mathbf{P}= \begin{pmatrix}
    \star & \star & \star & 1 & 2 & 3 \\
    \star & 1 & 2 & \star & \star & 4 \\
    1 & \star & 3 & \star & 4 & \star \\
    2 & 3 & \star & 4 & \star & \star
  \end{pmatrix}
  \label{ex:pda}
\end{equation}

 Each file $ W_n \in W_{[1:21]}$ gets splits into $F-Z =2 $ subfiles $\{W_{n,1},W_{n,2}\}$, each of size $F_s = B/2$ bits. Then, each file is encoded using a $(2,4)$ non-perfect secret sharing scheme. To generate the shares of a file, first form an $F \times 1$ column vector comprised of $F-Z$ subfiles and $Z$ independent random keys $\{V_{n,1},V_{n,2},\ldots, V_{n,Z}\}$, each uniformly distributed over $[2^{F_s}]$. Then, pre-multiply it with the parity check matrix of a $(2F,F)$ MDS code over $\mathbb{F}_{2^l}$, where $l$ is sufficiently large such that the $(2F,F)$ MDS code exists. In this example, we consider a $4 \times 4$ Cauchy matrix over $\mathbb{F}_{2^3}$. Thus, the four shares of the file $W_n$, $n \in [21]$ are obtained as follows: 
\begin{align*}
 \begin{bmatrix}
S_{n,1} \\
S_{n,2} \\
S_{n,3} \\
S_{n,4}
\end{bmatrix}=
\begin{bmatrix}
 1 & 6 & 2 & 4 \\
 6 & 1 & 4 & 2 \\
 2 & 4 & 1 & 6 \\
 4 & 2 & 1 & 6
\end{bmatrix}
\begin{bmatrix}
W_{n,1}\\
W_{n,2}\\
V_{n,1}\\
V_{n,2}
\end{bmatrix}
\end{align*}

The contents stored in each helper cache are:

\begin{equation*}
\begin{aligned}
  Z_1 & = \{S_{n,1}, S_{n,2} : n \in [21]\}\\
  Z_2 & = \{S_{n,1}, S_{n,3} : n \in [21]\}\\
  Z_3 & = \{S_{n,1}, S_{n,4} : n \in [21]\}\\
  Z_4 & = \{S_{n,2}, S_{n,3} : n \in [21]\}\\
  Z_5 & = \{S_{n,2}, S_{n,4} : n \in [21]\}\\
  Z_6 & = \{S_{n,3}, S_{n,4} : n \in [21]\}\\
\end{aligned}
\end{equation*}

Let the user-to cache association be $\mathcal{U}=\{\{1,2,3,4,5,6\},$ $\{7,8,9,10,11\},\{12,13,14,15\}\{16,17,18\},\{19,20\},\{21\}\}$ with $\mathcal{L}=(6,5,4,3,2,1)$. The generalized PDA $\mathbf{G}$ is obtained as given in \eqref{ex:gpda}. Each user has access to $2$ shares of each file, but the helper cache contents do not reveal any information about $W_{[1:21]}$ due to the $(2,4)$ non-perfect secret sharing encoding.

\setcounter{MaxMatrixCols}{50}
\begin{figure}
	\begin{equation}
	\mathbf{G}^T = \begin{pmatrix}
	\star &\star &(1,1) &(2,1)\\
	\star &\star &(1,2) &(2,2)\\
	\star &\star &(1,3) &(2,3)\\
	\star &\star &(1,4) &(2,4)\\
	\star &\star &(1,5) &(2,5)\\
	\star &\star &(1,6) &(2,6)\\
	\star &(1,1) &\star &(3,1)\\ 
	\star &(1,2) &\star &(3,2)\\ 
	\star &(1,3) &\star &(3,3)\\ 
	\star &(1,4) &\star &(3,4)\\ 
	\star &(1,5) &\star &(3,5)\\ 
	\star &(2,1) &(3,1) &\star\\
	\star &(2,2) &(3,2) &\star\\
	\star &(2,3) &(3,3) &\star\\
	\star &(2,4) &(3,4) &\star\\
	(1,1) &\star &\star &(4,1)\\
	(1,2) &\star &\star &(4,2)\\
	(1,3) &\star &\star &(4,3)\\
	(2,1) &\star &(4,1) &\star\\
	(2,2) &\star &(4,2) &\star\\
	(3,1) &(4,1) &\star &\star\\
	\end{pmatrix}_{21 \times 4}
	\label{ex:gpda}
	\end{equation}
\end{figure}

\begin{figure*}[t!]
	\begin{subfigure}[t]{0.5\textwidth}
		\centering
		\includegraphics[width=\textwidth]{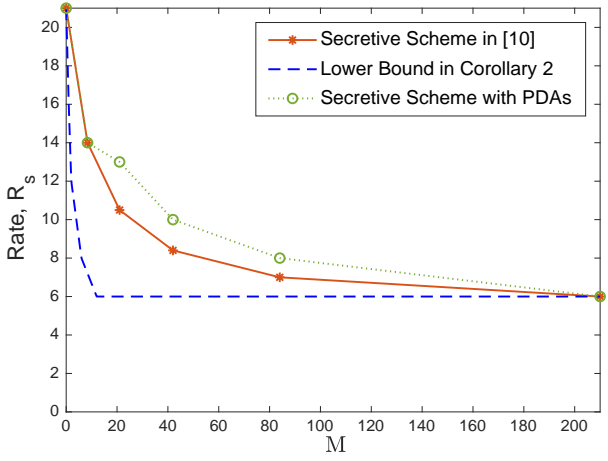}
		\caption{Association profile $\mathcal{L}=(6,5,4,3,2,1)$}
		\label{fig:rateprof6543}
	\end{subfigure}
	\hfill
	\begin{subfigure}[t]{0.5\textwidth}
		\centering
		\includegraphics[width=\textwidth]{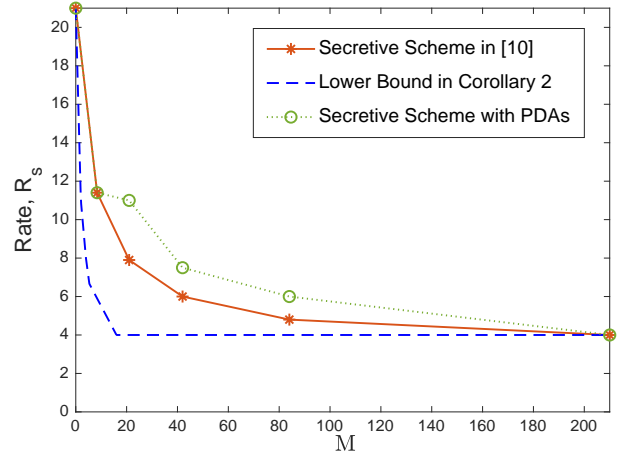}
		\caption{Association profile $\mathcal{L}=(4,4,4,3,3,3)$}
		\label{fig:rateprof43}
	\end{subfigure}
	
	\caption{Rate-memory trade-off of a shared cache network with $\Lambda=6$, $K=21$, $N=42$. }
	\label{fig:example2}
\end{figure*}

Corresponding to each distinct ordered pair in $\mathbf{G}$, the server generates $20$ independent random keys, each uniformly distributed over $[2^{F_s}]$ and is indexed by an ordered pair. The keys stored in each user's cache are:
\begin{equation*}
\begin{aligned}
\mathcal{Z}_1 & = \{\mathcal{K}_{(1,1)}, \mathcal{K}_{(2,1)}\}, \textrm{\hspace{0.05cm}} \mathcal{Z}_2  = \{\mathcal{K}_{(1,2)}, \mathcal{K}_{(2,2)}\}\\
\mathcal{Z}_3 & = \{\mathcal{K}_{(1,3)}, \mathcal{K}_{(2,3)}\}, \textrm{\hspace{0.05cm}} \mathcal{Z}_4  = \{\mathcal{K}_{(1,4)}, \mathcal{K}_{(2,4)}\}\\
\mathcal{Z}_5 & = \{\mathcal{K}_{(1,5)}, \mathcal{K}_{(2,5)}\}, \textrm{\hspace{0.05cm}} \mathcal{Z}_6  = \{\mathcal{K}_{(1,6)}, \mathcal{K}_{(2,6)}\}\\
\mathcal{Z}_7 & = \{\mathcal{K}_{(1,1)}, \mathcal{K}_{(3,1)}\}, \textrm{\hspace{0.05cm}} \mathcal{Z}_8  = \{\mathcal{K}_{(1,2)}, \mathcal{K}_{(3,2)}\}\\
\mathcal{Z}_9 & = \{\mathcal{K}_{(1,3)}, \mathcal{K}_{(3,3)}\}, \textrm{\hspace{0.05cm}} \mathcal{Z}_{10}   = \{\mathcal{K}_{(1,4)}, \mathcal{K}_{(3,4)}\}\\
\mathcal{Z}_{11} & = \{\mathcal{K}_{(1,5)}, \mathcal{K}_{(3,5)}\}, \textrm{\hspace{0.05cm}} \mathcal{Z}_{12}   = \{\mathcal{K}_{(2,1)}, \mathcal{K}_{(3,1)}\}\\
\mathcal{Z}_{13} & = \{\mathcal{K}_{(2,2)}, \mathcal{K}_{(3,2)}\}, \textrm{\hspace{0.05cm}} \mathcal{Z}_{14}   = \{\mathcal{K}_{(2,3)}, \mathcal{K}_{(3,3)}\}\\
\mathcal{Z}_{15} &  = \{\mathcal{K}_{(2,4)}, \mathcal{K}_{(3,4)}\},\textrm{\hspace{0.05cm}} \mathcal{Z}_{16}   = \{\mathcal{K}_{(1,1)}, \mathcal{K}_{(4,1)}\}\\
\mathcal{Z}_{17} & = \{\mathcal{K}_{(1,2)}, \mathcal{K}_{(4,2)}\}, \textrm{\hspace{0.05cm}} \mathcal{Z}_{18}   = \{\mathcal{K}_{(1,3)}, \mathcal{K}_{(4,3)}\}\\
\mathcal{Z}_{19} & = \{\mathcal{K}_{(2,1)}, \mathcal{K}_{(4,1)}\}, \textrm{\hspace{0.05cm}} \mathcal{Z}_{20}   = \{\mathcal{K}_{(2,2)}, \mathcal{K}_{(4,2)}\}\\
\mathcal{Z}_{21} & = \{\mathcal{K}_{(3,1)}, \mathcal{K}_{(4,1)}\}.
\end{aligned}
\end{equation*}

In the delivery phase, each user $k \in [21]$ requests a file $W_k$ from the server. Then, the messages transmitted are as follows:
\begin{equation*}
\begin{aligned}
  X_{(1,1)} & = S_{1,3} \oplus S_{7,2} \oplus S_{16,1} \oplus \mathcal{K}_{(1,1)} \\
  X_{(1,2)} & = S_{2,3} \oplus S_{8,2} \oplus S_{17,1} \oplus \mathcal{K}_{(1,2)} \\
  X_{(1,3)} & = S_{3,3} \oplus S_{9,2} \oplus S_{18,1} \oplus \mathcal{K}_{(1,3)} \\
  X_{(1,4)} & = S_{4,3} \oplus S_{10,2} \oplus \mathcal{K}_{(1,4)} \\
  X_{(1,5)} & = S_{5,3} \oplus S_{11,2} \oplus \mathcal{K}_{(1,5)} \\
  X_{(1,6)} & = S_{6,3} \oplus \mathcal{K}_{(1,6)} \\
  X_{(2,1)} & = S_{1,4} \oplus S_{12,2} \oplus S_{19,1} \oplus \mathcal{K}_{(2,1)} \\
  X_{(2,2)} & = S_{2,4} \oplus S_{13,2} \oplus S_{20,1} \oplus \mathcal{K}_{(2,2)} \\
  X_{(2,3)} & = S_{3,4} \oplus S_{14,2} \oplus \mathcal{K}_{(2,3)} \\
  X_{(2,4)} & = S_{4,4} \oplus S_{15,2} \oplus \mathcal{K}_{(2,4)}\\
   X_{(2,5)} & = S_{5,4} \oplus \mathcal{K}_{(2,5)}, \textrm{\hspace{0.02cm}} X_{(2,6)} = S_{(6,4)} \oplus \mathcal{K}_{(2,6)} \\  X_{(3,1)} & = S_{7,4} \oplus S_{12,3} \oplus S_{21,1}\oplus \mathcal{K}_{(3,1)}\\
    X_{(3,2)} & = S_{8,4} \oplus S_{13,3} \oplus \mathcal{K}_{(3,2)}\\
    X_{(3,3)} & = S_{9,4} \oplus S_{14,3} \oplus \mathcal{K}_{(3,3)}\\    
    X_{(3,4)} & = S_{10,4} \oplus S_{15,3} \oplus \mathcal{K}_{(3,4)}, \textrm{\hspace{0.02cm}} X_{(3,5)} = S_{11,4} \oplus \mathcal{K}_{(3,5)}\\ 
    X_{(4,1)} & = S_{16,4} \oplus S_{19,3} \oplus S_{21,2} \oplus \mathcal{K}_{(4,1)}\\ 
    X_{(4,2)} & = S_{17,4} \oplus S_{20,3} \oplus \mathcal{K}_{(4,2)} \\
    X_{(4,3)} & = S_{18,4}  \oplus \mathcal{K}_{(4,3)}. 
 \end{aligned}
\end{equation*}

Observe that each transmission is encrypted by a key that is available with all the users involved in that transmission, and each user is able to recover the remaining two shares of its requested file using its available cache contents. Since the Cauchy matrix is invertible, each user $k$ can recover its requested file $W_k$ from the obtained four shares.

The secrecy constraint in \eqref{secrecy} is also satisfied as each user has obtained only the new shares of its requested file, thus no information is gained about the remaining files. Also, note that the delivery scheme ensures secrecy of the files from an external adversary which observes the transmissions over the broadcast link. Thus, the decodability and the secrecy constraints in \eqref{decod} and \eqref{secrecy} are satisfied.

 \begin{figure}[t!]
 	\begin{center}
 		\captionsetup{justification=centering}
 		\includegraphics[width=\columnwidth]{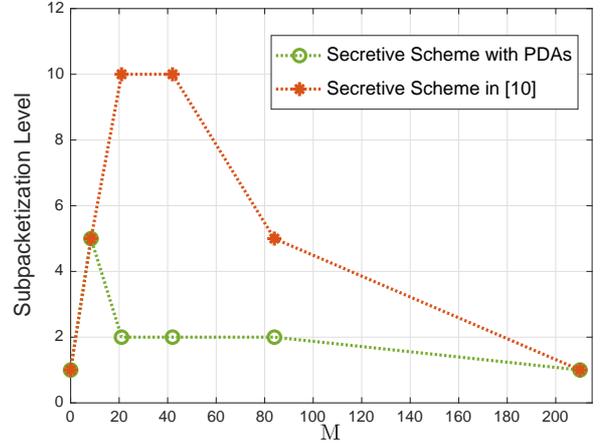}
 		\caption{Subpacketization level for a shared cache network with $\Lambda=6$, $K=21$, $N=42$. }
 		\label{fig:subpacket}
 	\end{center}
 \end{figure} 
 
 Since there are $20$ transmissions, each of normalized size $\frac{1}{(F-Z)} = \frac{1}{2}$, the rate obtained is $R_s = 20/2 =10$. For the same setting, if we employ the scheme in \cite{MeR}, the subpacketization level required is $10$ and the normalized rate achieved is $R_s = 8.4$. Thus, by utilizing the available PDA constructions, we could get more practically realizable secretive coded caching schemes without paying much in the required rate. 
 
 The rate-memory trade-off of a shared cache network with $\Lambda=6$, $K=21$ and $N=42$ is depicted in Fig.~\ref{fig:example2}. In Fig.~\ref{fig:rateprof6543}, the profile $\mathcal{L}=(6,5,4,3,2,1)$ is considered and the rates achieved by our scheme and the scheme in \cite{MeR} are plotted for comparison. The scheme in \cite{MeR} is defined for $\frac{M}{N} \in \{0, \frac{1}{5}, \frac{2}{4}, \frac{3}{3}, \frac{4}{2},{5}\}$. The remaining points are achievable by memory-sharing. At points $\frac{M}{N} \in \{\frac{2}{4}, \frac{3}{3}, \frac{4}{2}\}$, we make use of the PDAs from \cite{YCT} and \cite{CWZW} to obtain the corresponding secretive coded caching scheme. At the above points, the scheme obtained by our procedure requires less subpacketization level than the scheme in \cite{MeR} and is illustrated in Fig.~\ref{fig:subpacket} . At $\frac{M}{N} \in \{\frac{1}{5}, 5\}$, the PDAs described in Proposition~\ref{proposition} are used to derive the secretive coded caching schemes. Fig.~\ref{fig:rateprof43} shows the performance comparison for another profile $\mathcal{L}=(4,4,4,3,3,3)$.

 Next, we show that the secretive coded caching scheme for shared caches given in \cite{MeR} can be recovered using our procedure.

\begin{prop}
 For a $(\Lambda, K, M, N)$ shared caching problem with an association profile $\mathcal{L}$ and  $\frac{M}{M+N} \in \{ 0, \frac{1}{\Lambda}, \frac{2}{\Lambda}, \ldots,\frac{\Lambda-1}{\Lambda}\}$, let $t_s \triangleq \frac{\Lambda M}{M+N}$, the secretive coded caching scheme proposed in \cite{MeR} can be recovered using a $(\Lambda, F, Z, S)$ PDA where $F = \binom{\Lambda}{t_s}$, $Z = \binom{\Lambda-1}{t_s-1}$ and $S=\binom{\Lambda}{t_s+1}$.
 \label{proposition}
\end{prop}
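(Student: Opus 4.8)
The plan is to take the \emph{MN PDA on $\Lambda$ columns} as the PDA in Theorem~\ref{thm1} and verify, phase by phase, that the procedure of Section~\ref{sec:secretive} applied to it reproduces the scheme of \cite{MeR}, in particular the same subpacketization and rate.

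First, I would construct the array. With $t_s=\frac{\Lambda M}{M+N}\in\{0,1,\dots,\Lambda\}$, index the $F=\binom{\Lambda}{t_s}$ rows by the $t_s$-subsets $\mathcal{T}\subseteq[\Lambda]$, fix a bijection $f$ from the $(t_s+1)$-subsets of $[\Lambda]$ onto $[S]$ with $S=\binom{\Lambda}{t_s+1}$, and set $p_{\mathcal{T},\lambda}=\star$ when $\lambda\in\mathcal{T}$ and $p_{\mathcal{T},\lambda}=f(\mathcal{T}\cup\{\lambda\})$ otherwise. The three PDA axioms are the standard MN-PDA verification from \cite{YCT}: $C1$ holds because a fixed $\lambda$ lies in $\binom{\Lambda-1}{t_s-1}$ of the $t_s$-subsets, so $Z=\binom{\Lambda-1}{t_s-1}$; $C2$ is surjectivity of $f$; and $C3$ holds since $p_{\mathcal{T}_1,\lambda_1}=p_{\mathcal{T}_2,\lambda_2}=f(\mathcal{B})$ forces $\mathcal{T}_i=\mathcal{B}\setminus\{\lambda_i\}$, which puts the relevant $2\times2$ subarray in the required form. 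Since $\frac{Z}{F}=\frac{\binom{\Lambda-1}{t_s-1}}{\binom{\Lambda}{t_s}}=\frac{t_s}{\Lambda}=\frac{M}{M+N}$, Theorem~\ref{thm1} applies to this PDA.

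Next, I would match placement and delivery. Running Algorithm~2 on this PDA gives helper-cache contents $Z_\lambda=\{S_{n,\mathcal{T}}:n\in[N],\ \lambda\in\mathcal{T}\}$ — exactly the cut-and-secret-share placement of \cite{MeR}, with the $(Z,F)$ non-perfect secret sharing used here being the $(\binom{\Lambda-1}{t_s-1},\binom{\Lambda}{t_s})$ scheme used there. Column $\mathbf{p}_\lambda$ is copied $\mathcal{L}_\lambda$ times into $\mathbf{G}$ with second coordinate $i\in\{1,\dots,\mathcal{L}_\lambda\}$, so the user $\mathcal{U}_\lambda(i)$ stores the keys $\{\mathcal{K}_{(f(\mathcal{T}\cup\{\lambda\}),\,i)}:\lambda\notin\mathcal{T}\}$, and for each pair $(f(\mathcal{B}),i)$ Algorithm~3 transmits the XOR of $\{S_{d_{\mathcal{U}_\lambda(i)},\,\mathcal{B}\setminus\{\lambda\}}:\lambda\in\mathcal{B},\ \mathcal{L}_\lambda\geq i\}$ one-time padded by $\mathcal{K}_{(f(\mathcal{B}),i)}$; these are precisely the transmissions of \cite{MeR}, so the two schemes coincide.

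Finally, I would recompute the rate from \eqref{eq:thm1}. For $s=f(\mathcal{B})$ the integer $s$ occurs in column $\lambda$ iff $\lambda\in\mathcal{B}$, hence $\tau_s=\min\mathcal{B}$; grouping $(t_s+1)$-subsets by minimum element and using that $\binom{\Lambda-\lambda}{t_s}$ of them have minimum $\lambda$,
\begin{equation*}
\sum_{s\in[S]}\mathcal{L}_{\tau_s}\;=\;\sum_{\lambda=1}^{\Lambda}\binom{\Lambda-\lambda}{t_s}\mathcal{L}_\lambda\;=\;\sum_{\lambda=1}^{\Lambda-t_s}\binom{\Lambda-\lambda}{t_s}\mathcal{L}_\lambda .
\end{equation*}
Dividing by $F-Z=\binom{\Lambda}{t_s}-\binom{\Lambda-1}{t_s-1}=\binom{\Lambda-1}{t_s}$ gives the rate reported in \cite{MeR}; the excluded endpoints $\frac{M}{M+N}\in\{0,\frac{\Lambda-1}{\Lambda}\}$ correspond to the degenerate PDAs $t_s\in\{0,\Lambda-1\}$ and are handled directly. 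The only real difficulty is bookkeeping — tracking the ordered-pair relabelling of Algorithm~2 cache-by-cache and user-by-user against the per-user key indexing of \cite{MeR}, and confirming that the secret-sharing parameters agree — after which everything reduces to the routine MN-PDA checks and the minimum-element count above.
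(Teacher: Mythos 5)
Your proposal is correct and follows essentially the same route as the paper: construct the MN-type PDA on $\Lambda$ columns via the bijection $f$ on $(t_s+1)$-subsets, check $\frac{Z}{F}=\frac{t_s}{\Lambda}=\frac{M}{M+N}$, and run Algorithms~2 and~3 to recover the scheme of \cite{MeR}. You in fact supply more detail than the paper's proof (explicit verification of conditions $C1$--$C3$ and the minimum-element rate count $\sum_{s}\mathcal{L}_{\tau_s}=\sum_{\lambda}\binom{\Lambda-\lambda}{t_s}\mathcal{L}_\lambda$, which the paper leaves implicit); the only slip is describing $\frac{M}{M+N}\in\{0,\frac{\Lambda-1}{\Lambda}\}$ as \emph{excluded} endpoints when they are included in the statement's range (so $t_s\in\{0,\dots,\Lambda-1\}$, not $\{0,\dots,\Lambda\}$), but this does not affect the argument.
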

\begin{proof}
	
	The scheme in \cite{MeR} is defined for $\frac{M}{M+N} \in \{ 0, \frac{1}{\Lambda}, \frac{2}{\Lambda}, \ldots,\frac{\Lambda-1}{\Lambda}\}$. Therefore, $t_s \in [0,\Lambda-1]$. Choose a $t_s \in [0, \Lambda-1]$ such that $F = \binom{\Lambda}{t_s} $. Each row of the PDA $\mathbf{P}$ is indexed by sets $\mathcal{T} \subset [\Lambda]$ , where $|\mathcal{T}|=t_s$. Each helper cache, $\lambda$ stores the shares $S_{n,\mathcal{T}}$ $\forall n \in [N]$, if $\lambda \in \mathcal{T}$. Thus, we obtain $Z = \binom{\Lambda-1}{t_s-1}$. The integer entries in the PDA are obtained by defining a bijective function $f$ from the $t_s+1$ sized subsets of $[\Lambda]$ to the set $[\binom{\Lambda}{t_s+1}]$ such that
	\begin{equation}
	p_{\mathcal{T},\lambda} = 
	\begin{cases}
	f(\mathcal{T}\cup \{\lambda\}), & \textrm{if \hspace{0.05cm}} \lambda \notin \mathcal{T}.\\
	\star, & \textrm{elsewhere}.
	\end{cases}
	\label{eq:pda_MN}
	\end{equation}
	
	After constructing the PDA $\mathbf{P}$, follow the procedures in Algorithms $2$ and $3$. Then, we will obtain the secretive coded caching scheme in \cite{MeR}. The reason behind the above PDA construction is that the secretive coded caching scheme in \cite{MeR} is derived from a non-secretive coded caching scheme for shared caches proposed in \cite{PUE}, which in fact, is derived from the MN scheme. Corresponding to every MN scheme, we can obtain a PDA. Thus, starting with a PDA which corresponds to the equivalent MN scheme would result in the secretive coded caching scheme for shared cache.
	\end{proof}

  \section{Lower Bound and Order-Optimality}
 \label{privateappendix}
 In this section, we derive a cut-set based lower bound on $R_s^*(M,M_U)$ and then show that the performance of the new secretive coded caching scheme for shared caches obtained using PDAs, is always within a multiplicative gap of $\Lambda$ from the above information theoretic lower bound. 
 
 Let $s$ be an integer such that $s\in \{1,2,\hdots,\min(K,\frac{N}{2})\}$. Consider the caches $\mathcal{Z}_{[1:s]}$. Corresponding to the demand vector $\mathbf{d}_1 = (1,2,\hdots,s,\phi,\hdots,\phi)$, where the first $s$ requests are for distinct files and the last $K-s$ requests can be for arbitrary files, the server makes transmission $X_1$. Using $X_{1}$ and the cache contents  $\mathcal{Z}_{[1:s]}$ and $Z_{[1:\lambda_s]}$, the files $W_{[1:s]}$ can be decoded. Similarly, using the same cache contents and the transmission corresponding to the demand vector $\mathbf{d}_2 = (s+1,s+2,\hdots,2s,\phi,\hdots,\phi)$, the files $W_{[s+1:2s]}$ can be decoded. Therefore, considering $\floor{N/s}$ such demand vectors, there will be $\floor{N/s}$ transmissions, $X_{[1:\floor{N/s}]}$. Using $\mathcal{Z}_{[1:\lambda_s]}$, $Z_{[1:s]}$, and from the transmissions $X_{[1:\floor{N/s}]}$, the files $W_{[1:s\floor{N/s}]}$ can be decoded. Let $X_\ell$ denote the server transmission corresponding to the $\ell^{th}$ demand instance, where $\ell\in [1:\floor{N/s}]$. For $k\in [s]$, we define $d_k^\ell\triangleq (\ell-1)s+k$ and $\widehat{W} \triangleq W_{[1:s\floor{N/s}]\backslash d_k^\ell}$. Then we have,
 
 \begin{subequations}
 	\begin{align}
 	s\floor{N/s}-1 &= H(\widehat{W})\notag\\
 	&= I(\widehat{W};X_{[1:\floor{N/s}]},\mathcal{Z}_{[1:s]},Z_{[1:\lambda_s]})\notag\\
 	&\qquad+H(\widehat{W}|X_{[1:\floor{N/s}]},\mathcal{Z}_{[1:s]},Z_{[1:\lambda_s]})\notag\\
 	&= I(\widehat{W};X_{[1:\floor{N/s}]},\mathcal{Z}_{[1:s]},Z_{[1:\lambda_s]})\label{privacyc}\\
 	&= I(\widehat{W};X_\ell,\mathcal{Z}_k,Z_{\lambda_k}) \textrm{\hspace{0.1cm}}+\notag\\
 	&\quad I(\widehat{W};X_{[1:\floor{N/s}]\backslash\{\ell\}},\mathcal{Z}_{[1:s]\backslash\{k\}}, \notag \\  
 	& \qquad \qquad \qquad \qquad Z_{[1:\lambda_s]\backslash\{\lambda_k\}}|X_\ell,\mathcal{Z}_k,Z_{\lambda_k})\notag\\
    &= I(\widehat{W};X_{[1:\floor{N/s}]\backslash\{\ell\}},\mathcal{Z}_{[1:s]\backslash\{k\}}, \notag \\ 
    & \qquad \qquad Z_{[1:\lambda_s]\backslash\{\lambda_k\}}|X_\ell,\mathcal{Z}_k,Z_{\lambda_k})\label{privacye}\\
 	&\leq H(X_{[1:\floor{N/s}]\backslash\{\ell\}},\mathcal{Z}_{[1:s]\backslash\{k\}},Z_{[1:\lambda_s]\backslash\{\lambda_k\}})\notag\\
 	&\leq \sum_{\substack{j=1 \\ j\neq \ell}}^{\floor{N/s}} H(X_j)+\sum_{\substack{j=1 \\ j\neq k}}^s H(\mathcal{Z}_j)++\sum_{\substack{j=1 \\ j\neq \lambda_k}}^{\lambda_s} H(Z_j) \notag\\
 	&\leq (\floor{N/s}-1)R_s^*(M)+(s-1)M_U\notag\\
 	&\qquad+(\lambda_s-1)M,\notag
 	\end{align}
 \end{subequations}
 where \eqref{privacyc} follows from the decodability condition \eqref{decod}, and \eqref{privacye} follows from the secrecy condition \eqref{secrecy}.
 
 Upon rearranging the terms, we obtain,
 \begin{equation*}
 R_s^*(M,M_U) \geq \frac{s\floor{N/s}-1-(\lambda_s-1)M-(s-1)M_U}{\floor{N/s}-1}.
 \end{equation*}
 
 Optimizing over all possible choices of $s$, we have the lower bound on $R_s^*(M,M_U)$,
 \begin{equation*}
 \begin{aligned}
 & R_s^*(M,M_U) \geq \\
 & \max_{s\in \{1,2,\hdots,\min(\frac{N}{2},K)\}} \small{ \frac{s\floor{N/s}-1-(\lambda_s-1)M-(s-1)M_U}{\floor{N/s}-1}}.
 \end{aligned}
 \end{equation*}
 
 This completes the proof of Theorem \ref{thmsecrecy}.\hfill $\blacksquare$ 

	\begin{figure}[t!]
 	\begin{center}
 		\captionsetup{justification=centering}
 		\includegraphics[width=\columnwidth]{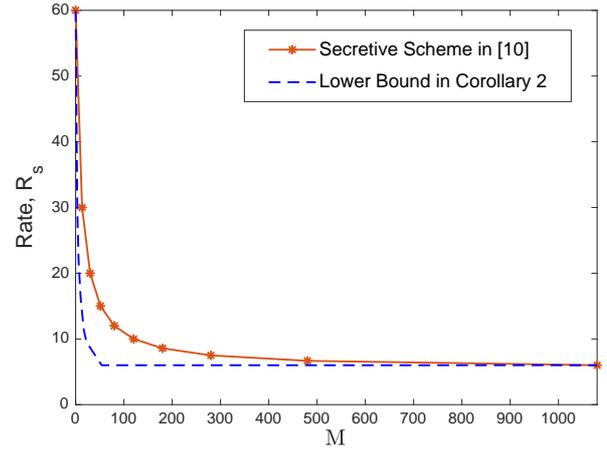}
 		\caption{Rate-memory trade-off of shared cache network with $\Lambda=10$, $K=60$, $N=120$ with uniform association profile.}
 		\label{fig:ratemem}
 	\end{center}
 \end{figure}

 In our setting, $M_U$ is fixed as unity. Therefore, the above lower bound expression reduces to 
\begin{equation*}
R^{*}_s (M) \geq \underset{s \in \{1,2,\ldots, \textrm{min}(K,\frac{N}{2})\}}{\textrm{max}} s-\frac{(\lambda_s-1)M}{\floor{\frac{N}{s}}-1}.
\end{equation*}
Assume $N \geq 2K$, let $s=\mathcal{L}_1$, then we obtain
\begin{equation*}
 R^{*}_s(M) \geq \mathcal{L}_1.
 \end{equation*}
 
 In a $(\Lambda, F, Z, S)$ PDA, $S \leq \Lambda(F-Z)$. If $S = \Lambda(F-Z)$, then $R_s = \frac{K(F-Z)}{(F-Z)}=K$, which implies $R_s(M) \leq K$ in any secretive coded caching scheme obtained using PDAs. Consequently, 
 
 \begin{equation}
 \frac{R_s(M)}{R^{*}_s(M)}  \leq \frac{K}{\mathcal{L}_1}. 
 \label{eq:bound}  							
 \end{equation}

Since, $\mathcal{L}_1 \geq \frac{K}{\Lambda}$, \eqref{eq:bound} reduces to
\begin{equation*}
\frac{R_s(M)}{R^{*}_s(M)}  \leq \Lambda.  							
\end{equation*}

This completes the proof of Theorem~~\ref{thm3}. \hfill $\blacksquare$

In Fig.~\ref{fig:ratemem}, the performance of the  scheme in \cite{MeR} is compared against the cut-set lower bound obtained in Corollary~\ref{corr}. The optimality gap presented in Theorem~\ref{thm3} also holds for the secretive coded caching scheme given in \cite{MeR}.

\section{Conclusion}
\label{sec:con}
In this work, we developed a procedure to derive secretive coded caching schemes for shared caches using PDAs. The advantage of our scheme is that the extensive literature on PDA constructions can be leveraged to obtain more practically realizable secretive coded caching schemes for shared caches. The subpacketization level reduction achieved by our schemes is obtained by paying in the rate. Therefore, the PDAs must be chosen depending on the file size and the allowable subpacketization level. We have also derived a lower bound based on cut-set arguments and characterized the performance of the achievable scheme.

 \section*{Acknowledgement}
 This work was supported partly by the Science and Engineering Research Board (SERB) of Department of Science and Technology (DST), Government of India, through J.C Bose National Fellowship to B. Sundar Rajan.

\end{document}